\numberwithin{equation}{section}
\definecolor{linkred}{rgb}{0.75,0,0}
\definecolor{linkblue}{rgb}{0,0,0.75}
\theoremstyle{plain}
\newtheorem{maintheorem}{Theorem}
\theoremstyle{plain}
\newtheorem{theorem}{Theorem}[section]
\newtheorem{lemma}[theorem]{Lemma}
\newtheorem{proposition}[theorem]{Proposition}
\newtheorem{corollary}[theorem]{Corollary}
\theoremstyle{definition}
\newtheorem{definition}[theorem]{Definition}
\newtheorem{rem}[theorem]{Remark}
\newcommand{\s}{\sigma}
\newcommand{\B}{\overline{B}}
\renewcommand{\xi}{z}
\begin{document}

\title{The completeness of the Bethe ansatz for the periodic ASEP}

\author{Eric Brattain}
\affil{Department of Mathematics,
SUNY New Paltz,
New Paltz, NY 12561, U.S.A.
\texttt{brattaie@newpaltz.edu}}

\author{Norman Do}
\affil{School of Mathematical Sciences,
Monash University,
VIC 3800, Australia
\texttt{norm.do@monash.edu}}

\author{Axel Saenz}
\affil{Department of Mathematics,
University of Virginia,
Charlottesville, VA 22904, U.S.A.
\texttt{ais6a@virginia.edu}}

\date{\today}

\maketitle

\begin{abstract}
\noindent The asymmetric simple exclusion process (ASEP) for $N$ particles on a ring with $L$ sites may be analyzed using the Bethe ansatz. In this paper, we provide a rigorous proof that the Bethe ansatz is complete for the periodic ASEP. More precisely, we show that for all but finitely many values of the hopping rate, the solutions of the Bethe ansatz equations do indeed yield all $\binom{L}{N}$ eigenstates. The proof follows ideas of Langlands and Saint-Aubin, which draw upon a range of techniques from algebraic geometry, topology and enumerative combinatorics.
\end{abstract}



\setlength{\parskip}{0pt}
\tableofcontents %
\setlength{\parskip}{6pt}


\section{Introduction}

\subsection{Background}

The asymmetric simple exclusion process (ASEP) is a continuous-time Markov process that arose in the study of biopolymers~\cite{pipkin} and was first analyzed mathematically by Spitzer~\cite{spi}. It has since become a significant object of study in the fields of non-equilibrium statistical mechanics and interacting particle systems~\cite{mallick}. The process involves a set of particles located at distinct integral points on the real number line. At various times, a particle may attempt to move one unit to the right with a probability $p$~---~known as the \emph{hopping rate}~---~or one unit to the left with probability $q=1-p$, with the caveat that two particles cannot occupy the same site. The term \emph{asymmetric} refers to the fact that $p$ and $q$ are not necessarily equal; the term \emph{simple} refers to the fact that particles hop one unit; and the term \emph{exclusion} refers to the fact that particles cannot occupy the same site. In this paper, we focus exclusively on the periodic ASEP, in which case the sites lie on a ring. The simple rules governing the ASEP belie profoundly complex dynamics that make it a particularly interesting and useful model to consider.

The Bethe ansatz is a technique that dates back to the seminal paper of Bethe~\cite{Bethe} on the Heisenberg XXX spin chain and has since been applied to a wide variety of models in quantum and statistical mechanics. For the reader looking for the general history and applications of the Bethe ansatz, there are various well-written surveys in the literature~\cite{Baxter,Faddeev, Gaudin, Sutherland}. The version of the ansatz that we study in this paper is often referred to as the coordinate Bethe ansatz, which has been applied to variations of the ASEP model since the work of Gwa and Spohn~\cite{GwaSpohn}. More recently, Tracy and Widom~\cite{TWasep} used formulas inspired by the coordinate Bethe ansatz to show that ASEP on $\mathbb{Z}$ with step initial condition belongs to the KPZ universality class~\cite{Corwin2}.

The issue of the completeness of the Bethe ansatz~---~in other words, whether it yields all possible eigenstates~---~has been addressed rigorously in only a handful of cases. Determining completeness is not only appealing from the standpoint of mathematical rigour, but is also desirable for various other reasons, such as the asymptotic analysis of ASEP starting from finite models. In 1994, Dorlas~\cite{Dorlas} proved completeness for the nonlinear Schr\"{o}dinger model but stated: ``In the Heisenberg chain the existence of bound states presents a problem in finite volume." 
Indeed, for the ASEP and Heisenberg XXZ models on the 1-dimensional lattice $\mathbb{Z}$, completeness was only recently proven by Borodin, Corwin, Petrov and Sasamoto~\cite{Borodin} as a corollary to powerful new algebraic insights into the structure of integrable systems. However, the finite periodic case has remained immune to these methods.

In the 1990s, Langlands and Saint-Aubin~\cite{Langlands,FrenchLanglands} outlined an argument based upon classical results in algebraic geometry and topology that reduces the problem of completeness of the Bethe ansatz for the XXZ model to a certain enumeration of spanning forests. In this paper, we expand on the argument of Langlands and Saint-Aubin and apply it to the case of the periodic ASEP. We hope to make these powerful techniques more well-known and accessible to a wider audience. Our main result is the following.

\begin{maintheorem} \label{mainthm}
For generic values of the hopping rate $p$, the Bethe ansatz for the periodic ASEP is complete.\end{maintheorem}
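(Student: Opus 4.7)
The plan is to follow the Langlands--Saint-Aubin framework, translating the completeness question into a count of points of a zero-dimensional algebraic variety defined by the Bethe ansatz equations (BAE), and then evaluating that count via combinatorial specialization. Concretely, after fixing the particle number $N$, the coordinate Bethe ansatz produces, from each solution $\mathbf{z} = (z_1,\ldots,z_N)$ of the BAE, an eigenvector $\Psi_\mathbf{z}$ of the ASEP generator on the $N$-particle sector, and two solutions related by a permutation yield the same state. Completeness therefore splits into two sub-claims: (i) for generic $p$, the number of $S_N$-orbits of admissible BAE solutions equals $\binom{L}{N}$; and (ii) the resulting Bethe states are linearly independent. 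Claim (ii) should be the easier one, provable by a Vandermonde-type non-vanishing argument on the leading monomials of $\Psi_\mathbf{z}$, valid off a proper Zariski-closed subset of the $p$-line.

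For claim (i), I would view the BAE, after clearing denominators, as cutting out a family of zero-dimensional subschemes $V(p) \subset (\mathbb{C}^*)^N / S_N$ which is flat over a Zariski-open set of hopping rates; the length of $V(p)$ is then locally constant on that set, and the task reduces to evaluating this constant. To do so, I would embed $(\mathbb{C}^*)^N$ in a toric compactification $X_\Sigma$ whose fan is adapted to the Newton polytopes of the BAE. The Bernstein--Kushnirenko theorem supplies a mixed-volume formula for the total number of torus solutions, but one must account for the spurious roots that escape to the toric boundary. Each boundary stratum, corresponding to sending a subset of the $z_i$ to $0$ or $\infty$, contributes a count determined by an induced lower-dimensional polynomial system; these contributions must be classified and subtracted from the mixed volume.

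The combinatorial core of the proof, and the step I expect to be the main obstacle, is to show that this alternating sum of mixed volumes over the stratification reduces to the binomial $\binom{L}{N}$. Following Langlands and Saint-Aubin, the indexing data of the relevant boundary strata should repackage as spanning forests of a graph built from the cycle $\mathbb{Z}/L\mathbb{Z}$, converting the problem into a Kirchhoff-type matrix-tree evaluation. The work will lie in re-deriving this combinatorial identity with the ASEP-specific hopping parameters (where the Langlands--Saint-Aubin analysis treated the XXZ chain), in particular verifying that the $p$-dependent $S$-matrix coefficients do not spoil the combinatorial cancellations and that every boundary stratum is correctly accounted for. Once the identity is established, excluding the finitely many exceptional values of $p$ at which flatness, reducedness of $V(p)$, or linear independence of the $\Psi_\mathbf{z}$ fails yields the ``generic'' completeness statement of the theorem.
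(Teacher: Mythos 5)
Your overall strategy---reduce completeness to showing that the number of $S_N$-orbits of admissible Bethe-root tuples is $\binom{L}{N}$ for generic $p$, compactify, count all solutions, and subtract the spurious ones by a combinatorial argument---is the same broad strategy as the paper, but there is a genuine gap in how you propose to identify the spurious solutions. You only account for ``roots that escape to the toric boundary,'' i.e.\ solutions with some $z_i \to 0$ or $\infty$. The dominant class of inadmissible solutions, however, consists of tuples with $z_i = z_j$ for two distinct indices: these are honest interior points of $(\mathbb{C}^*)^N$ satisfying the cleared-denominator equations (already for $N=2$ there are $L$ of them, which must be removed from the $L^2$ torus solutions to reach $L(L-1)$), and they yield identically vanishing Bethe vectors because $A_{\sigma} = -A_{\sigma\tau}$ for the transposition $\tau$ swapping $i$ and $j$. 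A Bernstein--Kushnirenko count corrected only by boundary strata cannot see these diagonal solutions; your scheme $V(p)$, as defined, contains them and its length overcounts. Handling them is the combinatorial heart of the matter: the paper does it by an inclusion--exclusion over ``enhanced partitions'' of $[N]$ (recording which coordinates coincide and which sit at zeros/poles), with weights $(-1)^{|S|-1}(|S|-1)!$ chosen so that each inadmissible point is cancelled exactly once. Clearing the denominators also silently introduces points where $p+qz_iz_j-z_i = p+qz_iz_j-z_j = 0$; these again force $z_i=z_j$ and must be folded into the same analysis.

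Two further points of divergence are worth flagging. First, the paper's counting machine is not mixed volumes but the Lefschetz coincidence theorem: the Bethe equations are split, via auxiliary variables $w_{(k,l)}$, into a coincidence problem $\psi(x)=\phi(x)$ for two maps on a blow-up of $(\mathbb{CP}^1)^{N+\binom{N}{2}}$, and the alternating trace sum is evaluated on the homology of that space. Your BKK route is plausible in principle but would require you to rebuild the diagonal-exclusion bookkeeping from scratch. Second, your description of the combinatorics is off target: the relevant objects are planted (rooted) forests on the $N$ particles, each weighted by $L^{\#\text{roots}}$, and the final identity is $\sum_k s(N,k)\,L^k = L(L-1)\cdots(L-N+1)$ --- a Stirling-number identity established by a sign-reversing involution --- not a matrix-tree evaluation on a graph built from $\mathbb{Z}/L\mathbb{Z}$. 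Since you defer exactly this step as ``the main obstacle,'' the proposal as written does not yet contain the argument that makes the theorem work.
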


In this theorem, we interpret $p$ as a point in the Riemann sphere, rather than a real number in the interval $[0,1]$, in order to employ tools from algebraic geometry and topology. Furthermore, we use the term \emph{generic} to mean that $p$ lies in a Zariski open subset of $\mathbb{CP}^1$. In other words, the theorem is true for all but finitely many values of $p$. At such non-generic values of the hopping rate $p$, the Bethe ansatz does not yield a complete basis of eigenstates. However, a key observation is that the generic completeness guaranteed by Theorem~\ref{mainthm} allows one to use a limiting procedure to obtain a complete eigenspace decomposition for any value of $p$. In general, the basis given by this limiting procedure does not diagonalize the operator, but  decomposes it into Jordan canonical form.

\subsection{ASEP and the Bethe ansatz}

The ASEP model on the integer lattice $\mathbb{Z}$ is a continuous-time Markov process $\eta_t$, where we set $\eta_t(x) = 1$ if the site $x \in \mathbb{Z}$ is occupied by a particle at time $t$ and $\eta_t(x) = 0$ if the site $x \in \mathbb{Z}$ is vacant at time $t$~\cite{Liggett}. Each particle is controlled by an exponential timer with parameter 1 such that, when the timer activates, the particle attempts to jump from its current site at $x$ to an adjacent site $y = x \pm 1$ with a certain probability $p(x,y)$. If $y = x \pm 1$ is occupied, then $p(x, y) = 0$; if $y = x+1$ is unoccupied, then $p(x,y) = p$; and if $y = x-1$ is unoccupied, then $p(x,y) = q = 1-p$. In this paper, we focus exclusively on the ASEP model with periodic boundary conditions, in which each particle occupies one of $L$ sites on a ring. In other words, we work on $\mathbb{Z}_L$ rather than $\mathbb{Z}$. 

For a positive integer $n$, we introduce the notation
\[
[n] = \{1, 2, \ldots, n\} \qquad\qquad \text{and} \qquad \qquad [n]^{(2)} = \{ (i, j) \in \mathbb{Z}^2 \mid 1 \leq i < j \leq n\}.
\]
Let $\mathcal{X} = \displaystyle \bigotimes_{i=1}^L \mathbb{C}^2_i$ and consider the basis $\{u_{S} \mid S \subseteq [L] \}$, where $\mathbb{C}^2_i = \langle u_i^{+} , u_i^{-} \rangle$ and
\[
u_S = \bigg( \bigotimes_{i \in S} u_i^{+}\bigg) \otimes \bigg( \bigotimes_{i \in [L] \setminus S} u_i^{-}\bigg).
\]
The basis element $u_S$ naturally represents the state in which the sites in $S$ are occupied while the remaining sites are empty. Define the operator
\[
H(t;p) : \mathcal{X} \rightarrow \mathcal{X}, 
\]
where 
\[
 \sum_{S \subseteq [L]} a^{S}u_S \mapsto \sum_{S \subseteq [L]} b^{S}u_S
\]
and
\[
b^{S} = \sum_{S'} p( a^{S} - a^{S'}) + \sum_{S''} q(a^{S} - a^{S''}).
\]
Here, the summations range over all states $S'$ that differ from $S$ by a single particle that moved one position to the right and states $S''$ that differ from $S$ by a single particle that moved one position to the left. Note that $|S| = |S'| = |S''|$, so that the number of particles is conserved. Thus, the operator $H(t;p)$ decomposes into a direct sum of operators
\begin{equation}\label{hamiltonian}
H_N(t;p) : \mathcal{X}_N \rightarrow \mathcal{X}_N,
\end{equation}
where $\mathcal{X}_N = \mathrm{span} \{ u_{S} \mid |S| = N\}$ is the subspace with dimension $\binom{L}{N}$ whose basis elements correspond to states with $N$ particles.

One can solve the periodic ASEP model by diagonalizing the operator $H_N(t;p)$ using the Bethe ansatz, as was carried out by Gwa and Spohn~\cite{GwaSpohn}. 
Let $x_1, x_2, \ldots, x_N$ denote the sites occupied by the $N$ particles. The probability $u(X; t)$ of being in state $X = (x_1, \ldots, x_N)$ at time $t$ satisfies the so-called \emph{master equation}
\begin{equation} \label{master}
\frac{\partial u}{\partial t} = \sum_{i=1}^N [p u(x_i - 1) \delta_{i, i-1} + q u(x_i + 1)\delta_{i+1, i} - p u(x_i) \delta_{i+1,i } - q u(x_i) \delta_{i, i-1}].
\end{equation} 
Here, we use the notation $u(x_i \pm 1)$ as a shorthand for $u(x_1, \ldots, x_{i-1}, x_i \pm 1, x_{i+1}, \ldots x_N; t)$ and we define $\delta_{i,j} = 0$ if $x_i = x_j+1$ and $\delta_{i,j} = 1$ otherwise. The periodic boundary conditions impose the constraint $u(x_1,\ldots,x_N; t) = u(x_2,\ldots, x_N, x_1+L; t)$ and the initial state $Y = (y_1, \ldots, y_N)$ imposes the constraint $u(x_1,\ldots,x_N; 0) = \delta_{X,Y}$. We are not concerned here with the full computation of $u(X; t)$, so we shall not mention initial conditions again. 

The coordinate Bethe ansatz is derived from the observation that the master equation~\eqref{master} greatly simplifies if one assumes that the particles are sufficiently far apart from each other to be considered non-interacting. In this case, one obtains the so-called \emph{free equation}
\begin{equation} \label{freemaster}
\frac{\partial u}{\partial t} = \sum_{i=1}^N [p u(x_i - 1) + qu(x_i + 1) - u(x_i)].
\end{equation}
This free equation becomes equivalent to the master equation~\eqref{master} once we impose the boundary conditions
\begin{equation} \label{bcs}
p u(x_i,x_i) + q u(x_i + 1, x_i + 1) - u(x_i, x_i + 1) = 0.
\end{equation}

The Bethe ansatz proposes solutions of the form
\begin{equation} \label{ansatz}
u_{\vec{\xi}}(x_1, \ldots, x_N) = \sum_{\sigma \in S_N}A_{\sigma} \prod_{i=1}^N \xi_{\sigma(i)}^{x_i},
\end{equation}
where $\vec{z} = (z_1, \ldots, z_N)$ is a $N$-tuple of complex parameters and, for each $\sigma$,  $A_{\sigma}$ is a function of $z_1, \ldots, z_N$ chosen to satisfy the boundary conditions of equation~\eqref{bcs}.
As a result, one obtains
\begin{equation} \label{amplitudes}
A_{\s} (z_1, \ldots, z_N) = \mathop{\prod_{\textrm{inversions}}}_{(i,j)} - \frac{p + q \xi_i \xi_j - \xi_i}{p + q \xi_i \xi_j - \xi_j},
\end{equation}
where we recall that an \emph{inversion} of a permutation $\s$ is a pair $(i,j)$ such that $i < j$ and $\s(i) > \s(j)$.

Now imposing the periodicity constraint $u(x_1,\ldots,x_N; t) = u(x_2,\ldots,x_N, x_1 + L; t)$ on the ansatz of equation~\eqref{ansatz} leads to the \emph{Bethe ansatz equations}
\begin{align} \label{BetheEquations}
\xi_j^L &= (-1)^{N-1}\prod_{i=1}^N \frac{p + q \xi_j \xi_i - \xi_j}{p + q \xi_j \xi_i - \xi_i} \qquad \text{for } j=1, 2, \ldots, N.
\end{align}
Note that these equations imply the condition $\prod_{i=1}^N \xi_i^L = 1$. 

In this paper, we count the number of solutions to these equations and thereby determine the number of eigenstates produced by the Bethe ansatz. The structure of the paper is as follows.
\begin{itemize}
\item In Section 2, we analyze the case of $N = 2$ particles in detail. We introduce the algebro-geometric perspective and topological tools --- in particular, the Lefschetz theorem --- required to approach the general case.
\item In Section 3, we consider the geometric set-up of the general case of $N$ particles. Inadmissibility conditions are introduced to classify those solutions of the Bethe ansatz equations that do not actually lead to eigenstates. We show that solutions of the Bethe ansatz equations correspond to coincidences between certain algebraic functions $\psi$ and $\phi$. Such coincidences can then be counted by the Lefschetz theorem.
\item In Section 4, we compute the traces appearing in the Lefschetz theorem and show how they relate to the enumeration of trees. By a careful combinatorial argument, we deduce that the number of admissible solutions to the Bethe ansatz equations is $L(L-1) \cdots (L-N+1)$, as required.
\item In Section 5, we consider non-generic values of the hopping rate $p$, at which the Bethe ansatz fails to describe all possible eigenstates. We argue that a limiting procedure can always be used in conjunction with the Bethe ansatz to express the operator in Jordan canonical form.
\item In Appendix A, we discuss the algebro-geometric notion of blow-ups in the context of the maps $\psi$ and $\phi$ defined earlier. Whereas these are rational maps and hence, only defined on a Zariski open subset of their domains, the Lefschetz theorem requires a continuous maps between compact spaces. We therefore blow up the domains appropriately and prove the existence of smooth resolutions of the maps $\psi$ and $\phi$.
\end{itemize}


\section{The two-particle case} \label{sec:2particles}

In this section, we examine the case of two particles on a ring with $L$ sites. This simple example illustrates how algebraic geometry, topology and combinatorics come into play. In later sections, these ideas are combined to prove completeness of the Bethe ansatz for the general case of $N$ particles on a ring with $L$ sites.

\subsection{Naive approach}

In the case $N = 2$, the Bethe ansatz equations~\eqref{BetheEquations} reduce to the following, where $p + q = 1$.
\begin{equation} \label{betheqns2}
\xi_1^L = - \frac{p + q \xi_1 \xi_2 - \xi_1}{p + q \xi_1 \xi_2 - \xi_2} \qquad \qquad \qquad \qquad \xi_2^L = - \frac{p + q \xi_1 \xi_2 - \xi_2}{p + q \xi_1 \xi_2 - \xi_1}
\end{equation}
The Bethe ansatz then asserts that, for $z_1$ and $z_2$ satisfying these equations, one should seek eigenstates of the operator $H_2$ (see eqn.~\eqref{hamiltonian}) given by
\[
u(x_1, x_2) = A_{12} z_1^{x_1} z_2^{x_2} + A_{21} z_1^{x_2} z_2^{x_1},
\]
where the amplitudes $A_{12}$ and $A_{21}$ are thus related by equation~\eqref{amplitudes}.
\[
A_{12} = - A_{21} \frac{p + q \xi_1 \xi_2 - \xi_1}{p + q \xi_1 \xi_2 - \xi_2}
\]

However, note that simultaneous solutions to equations~\eqref{betheqns2} do not necessarily yield non-trivial eigenstates. First, if $z_1 = 0$ or $z_2 = 0$, then it is clear that $u(x_1, x_2) = 0$ for all $x_1$ and $x_2$. Second, if $z_1 = z_2 \neq 1, p/q$, then we have $A_{12} = -A_{21}$ and $u(x_1, x_2) = 0$ for all $x_1$ and $x_2$. Therefore, we declare solutions to \eqref{betheqns2} such that $u(x_1, x_2) =0$ for all $x_1$ and $x_2$ (such as $z_1 = z_2 \neq 1, p/q$) to be \emph{inadmissible}. However if $z_1 = z_2 = 1, p/q$, the Bethe equations ~\eqref{BetheEquations} and the amplitudes~\eqref{amplitudes}  are not well-defined, and we note that the steady state solution of ASEP corresponds to the case when $z_1 =z_2 =1$. We have to deal with these points where the Bethe equations are not well-defined in a special way. In the general case, we will keep encountering points where the Bethe equations are not well-defined, which we resolve by applying blow-ups to the domain of the functions (see sec.~\ref{sec:maps} and Appendix~\ref{blow}). In this base case of two particles, we can resolve the Bethe equations by clearing out the denominator from the rational functions, which works well for this example but in the general case we rather keep rational functions and apply blow-ups. Lastly, one can check that \emph{admissible} solutions~---~in other words, those with distinct $z_1, z_2 \in \mathbb{C}^*$ away from the not well-defined locus~---~do indeed yield non-trivial eigenstates. Furthermore, two admissible solutions $(z_1, z_2)$ and $(z_1', z_2')$ yield the same eigenstate if and only if they are permutations of each other. Therefore, we aim to show that there are $L(L-1)$ admissible solutions to equations~\eqref{betheqns2}, from which we may deduce that there are $\frac{1}{2} L(L-1) = \binom{L}{2}$ distinct eigenstates.


Let us initially consider a naive approach to the problem, in order to better understand the nature of the inadmissible solutions. 
Rewrite the system of equations~\eqref{betheqns2} as
\begin{equation} \label{epsilonbethe}
\xi_2 = \epsilon \xi_1^{-1} \text{ with } \epsilon^{L} = 1 \qquad\qquad \text{and} \qquad\qquad
\xi_1^L = - \frac{p + q \epsilon - \xi_1}{p + q \epsilon - \epsilon \xi_1^{-1}}.
\end{equation}
At first glance, it appears as though there are $L^2$ solutions, since there are $L$ choices for $\epsilon$ and each one yields $L$ solutions for $z_1$. However, it is necessary to exclude the inadmissible solutions with $\xi_1 = \xi_2 \neq 1 , p/q$. In this case, equations~\eqref{epsilonbethe} reduce to
\[
\epsilon \xi_1^{-1} = \xi_1 \text{ with } \epsilon^L = 1 \qquad \qquad \text{and} \qquad \qquad \xi_1^L = -1.
\]
Therefore, for each of the $L$ choices for $\epsilon$, there is exactly one solution $z_1 = \pm \epsilon^{1/2}$ that is compatible with the equation $z_1^L = -1$. So we obtain $L$ inadmissible solutions with $z_1 = z_2$. Note that inadmissible solutions with $z_1 = 0$ or $z_2 = 0$ only arise when $p = 0$, which we presently exclude from our consideration.

For the non-well-defined points $z_1 = z_2 = 1, p/q$, we clear out the denominator on the left equation of \eqref{epsilonbethe}. Then, the left equation of \eqref{epsilonbethe} becomes trivial for the non-well-defined points of the Bethe equations but we still need to satisfy the right equation of \eqref{epsilonbethe}. We must have that $\epsilon = (1)^2 ,(p/q)^2$ with $\epsilon^L = 1$. For the ASEP, we have that $p+q=1$ with $p,q \in \mathbb{R}$. So, when we resolve the Bethe equations for the non-well-defined points, we obtain only one more solution $z_1 =z_2 = 1$ (in the physical case of ASEP) corresponding to the steady state solution. Moreover, one should note that for the special solution of $z_1 =z_2 =1$, the amplitudes $A_{\sigma}$ are not well-defined and one can determine by inspection that $A_{12} =A_{21}=1$ for this case. In general, one may define the value of the amplitudes for the non-well-defined points after these equations are resolved via the blow-ups.

Note that, clearing the denominator of~\eqref{epsilonbethe} and obtaining a polynomial in $\xi_1$ and $p$ ($q = 1-p$), the inadmissible solution corresponds to factoring the polynomial
\begin{equation}\label{factor}
(p+ q\epsilon) \xi_1^L - \epsilon \xi_1^{L-1} - \xi_1 +(p+ q\epsilon) = (\xi_1 \pm \epsilon^{1/2}) f(\xi_1, p).
\end{equation}
Thus, we have that for each $\epsilon$, and for every $p$, there is is an inadmissible solution, and the rest of the solutions, which number $L(L-1)$ as we claimed, should be admissible solutions.


\subsection{Algebro-geometric perspective}

We now approach the problem of determining the number of admissible solutions using a more sophisticated perspective. Solutions to equations~\eqref{betheqns2} correspond to zeros of the ideal
\[
I = \big( (p + q \xi_1\xi_2 - z_2) \xi_1^L + (p + q \xi_1 \xi_2 - z_1), (p+ q\xi_1 \xi_2 - z_1) \xi_2^L + (p+ q\xi_ 1\xi_2  - z_2) \big), 
\]
whose generators are obtained by clearing the denominators in equations~\eqref{betheqns2}. There is a small subtlety in clearing the denominators, since one may inadvertently introduce new solutions in the case that $p+qz_1z_2-z_1 = p+qz_1z_2-z_2 = 0$. However, these equations imply that $z_1 = z_2= 1, p/q$, which yields an inadmissible solution that will eventually be excluded. Since these generators are polynomials in $z_1$, $z_2$ and $p$, one might expect to be able to bring the tools of algebraic geometry to bear. Thus, we consider the following commutative diagram of natural ring homomorphisms.
\[
\begin{tikzcd}
\mathbb{C}[z_1,z_2,p]/I \arrow[leftarrow]{r}{} \arrow[leftarrow]{dr}{}& \mathbb{C}[z_1,z_2,p] \arrow[leftarrow]{d}{}\\
& \mathbb{C}[p]
\end{tikzcd}
\]
Algebro-geometrically, this corresponds to the following morphisms between varieties.
\[
\begin{tikzcd}
\Sigma := \text{Spec }\mathbb{C}[z_1,z_2,p]/I  \arrow{r}{} \arrow{dr}{\pi}& \text{Spec } \mathbb{C}[z_1,z_2,p] = \mathbb{C}^3 \arrow{d}{} \arrow[hook]{r}{} & \left(  \mathbb{CP}^1 \right)^3 \arrow{d}{} \\
& \text{Spec }\mathbb{C}[p] = \mathbb{C} \arrow[hook]{r}{} & \mathbb{CP}^1
\end{tikzcd}
\] 
Note that we have extended the commutative diagram to include the natural embeddings $\mathbb{C} \to \mathbb{CP}^1$ and $\mathbb{C}^3 \to (\mathbb{CP}^1)^3$. This will play a role later on, when we introduce topological tools that require us to work with compact spaces.


Moreover, it follows easily from basic results in algebraic geometry (cf.~\cite{Hartshorne} Prop. 1.13 p.7) that the space $\Sigma$ is a 1-dimensional complex manifold (i.e. an orientable real surface with a complex structure). Thus, in the case of two particles, the map $\pi : \Sigma \rightarrow \mathbb{C}$ of surfaces is generically $L^2$-to-1, and by factoring the inadmissible solutions, we have shown that $\Sigma$ decomposes into a union of independent components 
\[
\Sigma = \Sigma_{\text{ad}} \cup \Sigma_{\text{in}}
\]
such that $\pi$ factors into maps
\[
\Sigma_{\text{ad}} \overset{L(L-1) : 1}{\longrightarrow} \mathbb{C} \qquad \text{and} \qquad \Sigma_{\text{in}} \overset{L : 1}{\longrightarrow} \mathbb{C},
\]
where the first map is the map of admissible solutions and the second map is the map of inadmissible solutions. These observations follow from the factorization~\eqref{factor} where the roots found correspond to the inadmissible solutions.

In the general case of $N$ particles, we wish to define a surface $\Sigma$ and a map $\pi : \Sigma \rightarrow \mathbb{C}$ via the Bethe ansatz equations. Then we decompose $\Sigma$ into the space of admissible solutions, $\Sigma_{\text{ad}}$, and the space of inadmissible solutions, $\Sigma_{\text{in}}$, and show that the induced map 
\[
\Sigma_{\text{ad}} \rightarrow \mathbb{C}
\]
is generically $N! \binom{L}{N}$ to 1.

This will be accomplished by counting fixed points (correspondences) via the Lefschetz theorem~\cite{Lefschetz} and using the combinatorics of rooted trees to get the proper count. Still, even though the maps $\Sigma_{\text{ad}} \overset{\pi}{\rightarrow} \mathbb{C}$ are generically (i.e. for all but finite $p$) $N! \binom{L}{N}$ to 1, there will be special points $p_r \in \mathbb{C}$, called ramification points, such that $|\pi^{-1}(p_r)| < N! \binom{L}{N}$. For these points, the Bethe ansatz might not be complete, but knowing that $\Sigma_{\text{ad}}$ is independent from $\Sigma_{\text{in}}$, we can perform a limiting procedure that will transform $H_N(t;p_r)$ into Jordan canonical form. We develop this further in Section~\ref{limiting}.

We have a map of surfaces $\pi: \Sigma \rightarrow \mathbb{C}$, where the fiber $\pi^{-1}(p)$ for each $p \in \mathbb{C}$ consists of solutions to the Bethe ansatz equations, both admissible and otherwise. In the previous section, we  found $L$ inadmissible solutions. We now perform the count of admissible solutions using topological tools that will allow us to generalize to the case of $N$ particles.

Start by fixing a generic value of $p \in \mathbb{C}$ and consider the rational map $\phi : (\mathbb{CP}^1)^3 \dashedrightarrow (\mathbb{CP}^1)^3$ defined by
\[
([\xi_0^1 : \xi_1^1], [\xi_0^2 : \xi_1^2], [\omega_0 : \omega_1]) \mapsto 
([\omega_0 : \omega_1], [\omega_1 : \omega_0], [p \xi_0^1\xi_0^2 + q \xi_1^1 \xi_1^2 - \xi_1^1 \xi_0^2 : -(p \xi_0^1 \xi_0^2 + q \xi_1^1 \xi_1^2 - \xi_1^1 \xi_0^2)]).
\]
It is well-defined 
away from the codimension two subvariety
\[
Z := \bigcup_{i=0,1} \left\{ ([q : q] , [q : p] ) \right\} \times \mathbb{CP}^1 \subseteq (\mathbb{CP}^1)^3.
\]
Furthermore, consider the map $\psi : (\mathbb{CP}^1)^3 \rightarrow (\mathbb{CP}^1)^3$ defined by
\[
([\xi_0^1: \xi_1^1],[\xi_0^2 : \xi_1^2], [\omega_0: \omega_1]) \mapsto ([(\xi_0^1)^L: (\xi_1^1)^L],[(\xi_0^2)^L : (\xi_1^2)^L], [\omega_0: \omega_1]).
\]
Then, we choose a chart $U = (\xi_0^1\neq 0, \xi_0^2 \neq 0, \omega_0 \neq 0 )$ and $f: U \rightarrow \mathbb{C}^3$ given by 
\[
([\xi_0^1: \xi_1^1],[\xi_0^2 : \xi_1^2], [\omega_0: \omega_1]) \mapsto (\xi_1^1/ \xi_0^1,\xi_1^2 / \xi_0^2, \omega_1/ \omega_0)
\]%
Moreover, on this chart and away from $Z$ we have that the maps $f \circ \psi \circ f^{-1} , f \circ \phi \circ f^{-1} : \mathbb{C}^3 \dashedrightarrow \mathbb{C}^3$ are given by 
\begin{align*}
f \circ \psi \circ f^{-1}(\xi_1, \xi_2, w_{12}) &= (\xi_1^{L} , \xi_2^{L}, w_{12}) \\
f \circ \phi \circ f^{-1}(\xi_1, \xi_2, w_{12}) &= \left(w_{12}, w_{12}^{-1}, - \frac{p +q\xi_1\xi_2 - \xi_1}{p +q\xi_1\xi_2 - \xi_2}\right).
\end{align*}

Now, in this chart, we note that the system of equations~\eqref{betheqns2} is equivalent to the equation
\begin{equation} \label{corr1}
f \circ \psi \circ f^{-1} (\xi_1, \xi_2, w_{12}) = f \circ \phi \circ f^{-1} (\xi_1, \xi_2, w_{12}).
\end{equation}

We wish to count the number of solutions to~\eqref{corr1} by applying the Lefschetz theorem~\ref{lefschet}. However, this result applies only to continuous maps between compact spaces. Therefore, it is necessary to  resolve $\psi: (\mathbb{CP}^1)^3 \dashedrightarrow (\mathbb{CP}^1)^3$ into a smooth map. We do this by blowing up the domain along the subvariety $Z$ and compactifying the codomain. We explain the details of this construction for the general case in Appendix~\ref{blow}. Then, we define 
\begin{align*}
C &:= \mathrm{Blow}_{Z}((\mathbb{CP}^1)^3)\\
X &:= (\mathbb{CP}^1)^3
\end{align*}
There exist smooth maps $\tilde{\psi} , \tilde{\phi} : C \rightarrow X$ that extend the maps $\psi, \phi : X \dashedrightarrow X$ on the subvariety $Z$. Now, with these smooth resolutions, we can apply the Lefschetz theorem~\ref{lefschet}. Actually, we will use the Lefschetz theorem to count the number of solutions of 
\begin{equation} \label{corr2}
\tilde{\psi}(pt) = \tilde{\phi}(pt)
\end{equation}%
on the whole domain $C$ rather than just the chart $U$. Of course, this way we will inadvertently introduce extra solutions outside the chart $U$ that don't correspond to an eigenstate, but we will classify and discard those solutions along the way.

\subsection{Topological tools}

We now present the topological technique that we use to compute the number of points in $X$ that satisfy~\eqref{corr2}, denoted by $\lambda (\tilde{\psi} , \tilde{\phi})$.

\begin{theorem}[Strong Lefschetz theorem] \label{lefschet}
Given two differentiable maps $\phi , \psi : X \rightarrow Y$ of compact spaces the number of solutions (with multiplicity) of the equation $\phi (x) = \psi(x) $, the coincidence number of $\phi$ and $\psi$, is given by
\begin{equation} \label{trace}
\lambda  (\psi , \phi) := \sum_{i=0}^{\dim Y} (-1)^i \, \mathrm{Tr}(\psi_i \phi^i),
\end{equation}
where $\psi_i: H_i (X) \rightarrow H_i(Y)$ is the pushforward in homology and $\phi^i : H_i (Y) \rightarrow H_i(X)$ is the Poincar\'{e} dual of the pullback in cohomology. We also call $\lambda(\psi,\phi)$ the Lefschetz number.
\end{theorem}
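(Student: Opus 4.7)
The statement is the classical Lefschetz coincidence formula, so my plan is to assemble the standard ingredients rather than to construct a genuinely new argument. The whole calculation takes place on the product $X \times Y$, where the coincidence set is realized as the intersection of the graphs
\[
\Gamma_{\phi} = \{(x, \phi(x)) : x \in X\} \quad \text{and} \quad \Gamma_{\psi} = \{(x, \psi(x)) : x \in X\}.
\]
In the relevant applications both $X$ and $Y$ are closed oriented manifolds of the same dimension $n$, so each graph is an $n$-submanifold of the $2n$-manifold $X \times Y$. First I would reduce to the case where $\Gamma_{\phi}$ and $\Gamma_{\psi}$ meet transversely by a small homotopy of $\psi$; since both sides of \eqref{trace} are homotopy invariants, this costs nothing.

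Once transversality holds, the signed count of coincidences is the algebraic intersection number $\Gamma_{\phi} \cdot \Gamma_{\psi}$ in $X \times Y$. I would evaluate this number through Poincar\'e duality on $X \times Y$ together with the K\"unneth decomposition $H^*(X \times Y) \cong H^*(X) \otimes H^*(Y)$. Fixing graded bases of $H_*(X)$ and $H_*(Y)$ and their Poincar\'e-dual cohomology bases, a short and standard computation identifies the Poincar\'e dual of $[\Gamma_f]$ with an explicit class in the K\"unneth expansion whose coefficients are the matrix entries of $f^*$ on cohomology. Cupping the expansions of $[\Gamma_{\phi}]^{\mathrm{PD}}$ and $[\Gamma_{\psi}]^{\mathrm{PD}}$, integrating over $X \times Y$, and reorganising the resulting double sum by degree then yields the alternating sum $\sum_i (-1)^i \mathrm{Tr}(\psi_i \phi^i)$, once the cohomological pullback $\phi^*$ is transported across Poincar\'e duality into the homological Umkehr map $\phi^i$.

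The main obstacle is bookkeeping rather than substance. One has to track the Koszul-type signs that appear when cohomology factors are permuted inside the K\"unneth expansion, the sign picked up when comparing the orientation of a graph with the product orientation of $X \times Y$, and the $(-1)^i$ that the trace of a graded endomorphism contributes on the piece of degree $i$. One also has to verify that the Umkehr map $\phi^i$ in the statement is genuinely the transport of $\phi^*$ across Poincar\'e duality, so that the composite $\psi_i \phi^i$ is a well-defined endomorphism of $H_i(Y)$ whose trace is meaningful. For the smooth projective varieties and their blow-ups that appear later in the paper, ordinary manifold Poincar\'e duality applies directly, so no further technical hypotheses are needed and the formula \eqref{trace} follows.
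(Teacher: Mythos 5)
The paper does not actually prove Theorem~\ref{lefschet}; it is quoted as a classical result with a pointer to Lefschetz's own book \cite{Lefschetz}. Your proposal therefore supplies an argument where the paper supplies none, and the route you choose --- realizing coincidences as the intersection $\Gamma_{\phi} \cap \Gamma_{\psi}$ in $X \times Y$, perturbing to transversality, and evaluating $\Gamma_{\phi} \cdot \Gamma_{\psi}$ via Poincar\'e duality and the K\"unneth decomposition --- is the standard and correct derivation of the coincidence formula. Your outline is sound, including the observation that the hypotheses as literally stated (``differentiable maps of compact spaces'') are too weak and must be upgraded to closed oriented manifolds of equal dimension, which is what the paper actually uses ($C$ and $X$ are compact complex manifolds of the same dimension).

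One point deserves to be made explicit, because it is where the statement as quoted is loosest and where your sketch stops short. The transversality argument computes a \emph{signed} count of coincidences: each transverse intersection point of the two graphs contributes $\pm 1$ according to orientations. For the theorem to give ``the number of solutions (with multiplicity)'' in the sense the paper needs --- an honest nonnegative count that can be matched against $N!\binom{L}{N}$ --- every local index must be positive. This is not true for general differentiable maps, but it does hold here because $\tilde\psi$ and $\tilde\phi$ are holomorphic maps between complex manifolds, so the graphs are complex subvarieties of $C \times X$ and their local intersection multiplicities are positive integers (and equal to $1$ exactly at transverse, i.e.\ simple, solutions). You should either add this positivity remark or restrict the statement to the holomorphic setting; without it the formula only bounds the number of genuine solutions rather than computing it. With that caveat supplied, your proof is complete and, if anything, more self-contained than the paper's citation.
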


This theorem, in this generality, is not found in many textbooks, but it is discussed in old textbooks by Lefschetz himself, such as in~\cite{Lefschetz}. It is a quick check to see that our formulas satisfy the hypothesis of the theorem. As a matter of fact, this is the reason that we need to apply blow-ups to our original domain. Otherwise, we would have a singular domain and the theorem would not apply. Thus, we can use this formula to find the number of solutions of equation~\eqref{correspondence}.

\begin{rem}
In Theorem~\ref{lefschet}, we use homology with $\mathbb{C}$-coefficients. This a choice made by the authors for convenience in the proofs to follow. In any case, we will use $\mathbb{C}$ throughout the paper. So, we will denote $H_*(X;\mathbb{C})$ and $H_*(C;\mathbb{C})$ by $H_*(X)$ and $H_*(C)$, accordingly.
\end{rem}

We compute the trace of the induced maps on the cohomology vector spaces
\begin{align*}
\psi_i: H_i(C) &\rightarrow H_i(X) \\
\phi^i :H_i(X) &\rightarrow H_i(C)
\end{align*}
where the last map is defined via Poincar\'{e} duality.

Since $X = (\mathbb{CP}^1)^3$ and $C = \mathrm{Blow}_Z(X)$, we have
\begin{align*}
H_{*}(X) &= H_0(X) \oplus H_2(X) \oplus H_4(X) \oplus H_6(X) \\
&= \langle 1\rangle \oplus \langle e_1, e_2, e_{12}\rangle \oplus \langle e_1\otimes e_2, e_1 \otimes e_{12}, e_2 \otimes e_{12}\rangle \oplus \langle e_1 \otimes e_2 \otimes e_{12}\rangle \\
H_*(C) &= H_*(X) \oplus H.
\end{align*}
Here, the second summand comes from the blow-up (cf. Prop.~\ref{directsum} ). One can show that $\phi^i(H_*(X)) \subseteq H_*(X) $, making the direct summand $H$ of $H_*(C)$ irrelevant to our computations, and so we ignore it for the remainder. This is explained in more detail and  more generality in Appendix~\ref{blow} and Proposition~\ref{directsum}. 

So the induced maps are
\[
\left( \bigotimes_{i \in S } e_i \right) \otimes \left( \bigotimes_{t \in T} e_t \right) \overset{\phi^*}{\mapsto} L^{|\{1,2\} \setminus S |} \left( \bigotimes_{i \in S } e_i \right) \otimes \left( \bigotimes_{t \in T} e_t \right)
\]%
where $ S \subseteq \{ 1,2\}$ and $T \subseteq \{ (1,2) \}$.

The induced linear map $\psi_{*}$ is determined in the appendix for the general case. For two particles, we can simply compute it directly and find 
\begin{align*}
1 &\overset{\psi_{*}}{\mapsto} 1 & e_1 \otimes e_2 &\overset{\psi_*}{\mapsto} 0 \\
e_1, e_2 &\overset{\psi_{*}}{\mapsto} e_{12} & e_1 \otimes e_{12}, e_2 \otimes e_{12} &\overset{\psi_{*}}{\mapsto} e_1 \otimes e_{12} + e_2 \otimes e_{12} \\
e_{12} &\overset{\psi_{*}}{\mapsto} e_1 + e_2 & e_1 \otimes e_2 \otimes e_{12} &\overset{\psi_{*}}{\mapsto} 0.
\end{align*}
So, we compute the trace and obtain 
\[
\lambda (\psi,\phi) = L^2 +2 L,
\]
which counts all of the solutions, admissible or otherwise. Subtracting the $L$ inadmissible solutions as we did in the elementary algebraic approach, it still seems as though we have $2L$ too many. These extra $2L$ solutions are the solutions at infinity that we have inadvertently added by compactifying our spaces. So, we now have another class of inadmissible solution: $\xi_1 = \xi_2$ or $\xi_i= 0, \infty$. The second set of inadmissible solutions really corresponds to the zeros and poles of $\xi_i^{L}$, and having the poles and zeros at infinity with non-trivial multiplicity complicates the problem unnecessarily. Thus, following Langlands and Saint-Aubin~\cite{Langlands}, we replace the polynomial map $\xi_i \mapsto \xi_i^L$ with the rational map
\[
\xi_i \mapsto R(\xi_i) = \prod_{i=1}^L \frac{z-\alpha_i}{z-\beta_i},
\]
where $\alpha_1, \alpha_2, \ldots, \alpha_L$ are distinct and
\[
\beta_i = \frac{1}{q} - \frac{p}{q \alpha_i} \qquad \text{for } i = 1, 2, \ldots, L.
\]
This new function is homotopy equivalent to $\xi_i^L$ and it will thus give us the same results in the computation of the trace of the Lefschetz numbers, and the new inadmissibility conditions are: $\xi_1 = \xi_2$, $\xi_i = \alpha_j$, and $\xi_i = \beta_j$. Moreover, we note that under the deformation of the roots and poles of $R(z)$ the steady state solution of $z_1 = \cdots = z_N$ moves away from the locus where $z_i =z_j$. In particular, we have that $R(1) = (q/p)^L \prod_{i=1}^N \alpha_i$ which is not consistent with the condition that $\prod_{j=1}^N R(z_j) = 1$, meaning that $z_1 = \cdots = z_N =1$ is not a solution of the Bethe equations for a generic choice of $R(z)$.

For each inadmissibility condition, we consider the set of solutions corresponding to it by defining a closed subvariety that corresponds to such a condition and counting the solutions via the Lefschetz Theorem~\ref{lefschet} applied to this subvariety. For example, for the condition $\xi_1 = \xi_2$, we define
\begin{align*}
X\langle \xi_1 = \xi_2 \rangle & : = \{ (z_1, z_2, w_{12}) \in X \mid \xi_1 = \xi_2 \} \subseteq X \\
C\langle \xi_1 = \xi_2 \rangle & : = \{ (z_1, z_2, w_{12}) \in C \mid \xi_1 = \xi_2 \} \subseteq C .
\end{align*}

Then, we have
\[
\lambda (   X \langle \xi_1 = \xi_2 \rangle ) = L
\]

where we have abused notation since $\psi$ and $\phi$ are fixed and the subvarites are the main diference. Similarly, we define closed subspaces for other inadmissibility conditions and compute their Lefschetz numbers.
\begin{align*}
\lambda (X \langle \xi_1 = \alpha_i \rangle ) &= 1 & \lambda (X \langle \xi_1 = \alpha_i, \xi_2 = \alpha_j \rangle ) &= 0 \\
\lambda (X \langle \xi_1 = \beta_i \rangle ) &= 1 & \lambda (X \langle \xi_1 = \alpha_i, \xi_2 = \beta_j \rangle ) &= \delta_{i,j} \\
\lambda (X \langle \xi_2 = \alpha_i \rangle ) &= 1 & \lambda (X \langle \xi_1 = \beta_i, \xi_2 = \alpha_j \rangle ) &= \delta_{i,j} \\
\lambda (X \langle \xi_2 = \beta_i \rangle ) &= 1 & \lambda (X \langle \xi_1 = \beta_i, \xi_2 = \beta_j \rangle ) &= 0.
\end{align*}

Then, the number of inadmissible solutions, via the Inclusion-Exclusion principle of set theory, is
\begin{align*}
& \lambda (X \langle \xi_1 = \xi_2\rangle ) + \sum_i \left[ \lambda (X \langle \xi_1 = \alpha_i \rangle ) + \lambda (X \langle \xi_1 = \beta_i\rangle ) + \lambda (X \langle \xi_2 = \alpha_i \rangle ) + \lambda(X \langle \xi_2 = \beta_i \rangle) \right] \\
& - \sum_{i,j} \left[\lambda (X \langle \xi_1 = \alpha_i, \xi_2 = \alpha_j\rangle ) + \lambda (X \langle \xi_1 = \alpha_i, \xi_2 = \beta_j \rangle ) + \lambda (X \langle \xi_1 = \beta_i, \xi_2 = \alpha_j \rangle ) + \lambda (X \langle \xi_1 = \beta_i, \xi_2 = \beta_j \rangle ) \right],
\end{align*}
which computes to $3L$. That is, the number of admissible solutions is $L^2 +2L -3L = L^2 - L$ just as we showed in the previous section. This process generalizes completely, and it becomes a combinatorial problem to compute the traces of the homology maps on the closed subspaces of the inadmissible solutions.

\section{The geometric set-up}

In this section, we discuss the geometric set-up for the general case of the Bethe ansatz for the periodic ASEP with $N$ particles.

\subsection{Inadmissibility conditions}

Recall that the Bethe ansatz equations~\eqref{BetheEquations} state that
\begin{align} \label{L}
\xi_j^L &= (-1)^{N-1}\prod_{i=1}^N \frac{p + q \xi_j \xi_i - \xi_j}{p + q \xi_j \xi_i - \xi_i}, \qquad \text{for $j=1, 2, \ldots, N$}. 
\end{align}
Clearing the denominators, we define the polynomials
\[
f_j(\vec{z}, p) = \xi_j^L \prod_{i=1}^N \left( p +q \xi_{j} \xi_{i} - \xi_{i} \right) + (-1)^N \prod_{i=1}^N \left( p + q \xi_{j} \xi_{i} - \xi_{j} \right), \qquad \text{for } j = 1, 2, \ldots, N.
\]
Define the ideal $I = (f_1, f_2, \ldots , f_N) \subseteq \mathbb{C}[\vec{\xi}, p]$ and consider the inclusion and projection maps
\[
\mathbb{C}[p] \to \mathbb{C}[\vec{\xi}, p] \rightarrow \mathbb{C}[\vec{\xi}, p] / I.
\]
By applying the contravariant functor $\mathrm{Spec}$, one obtains a morphism of varieties
\[
\Sigma' \to \mathbb{C}^{N+1} \to \mathbb{C},
\]
with $\Sigma$ the closure of $\Sigma '$ in $(\mathbb{CP}^1)^{N+1}$. By considering $\Sigma' \subseteq \mathbb{C}^{N+1} \subseteq (\mathbb{CP}^1)^{N+1}$ and $\mathbb{C} \subseteq \mathbb{CP}^1$, we may extend this composition to a map between compact spaces
\[
\Sigma \overset{\pi}{\rightarrow} \mathbb{CP}^1.
\]
We then have that the preimage of a point $\pi^{-1}(p)$ is the set of solutions of the Bethe ansatz. It turns out that many of the preimage solutions lead to trivial eigenfunctions. In order to give a proper count, we must discard the solutions that lead to trivial eigenfunctions. So, we have the following definition.

\begin{definition}
We say that a solution $\vec{\xi}$ to the Bethe ansatz equations is \emph{inadmissible} if the resulting eigenstate satisfies $u_{\vec{\xi}}(\vec{x})  = 0$ for all $\vec{x}$.
\end{definition}

Inadmissible solutions admit a simple description via the following result, which generalizes the observations of Section~\ref{sec:2particles} in the case of two particles.

\begin{proposition}
\label{inad}
Suppose that $\vec{\xi}$ is a solution to the Bethe ansatz equations such that $z_i \in \mathbb{C}^*$ for $i = 1, 2, \ldots, N$, $(q/p)^{N - k} \neq 1$ for $k =2, \dots, N-1$, and $(p/q)^{L -N + k} \neq 1$ for $k = 2, \dots , N-1$. Then, except for the stationary solution $z_1 = \cdots= z_N =1$, $\vec{z}$ is inadmissible if and only if $z_i = z_j$ for two distinct indices $i$ and $j$ .
\end{proposition}

\begin{rem} 
As the Bethe ansatz equations stand, we cannot have a solution to~\eqref{L} such that $\xi_k = 0 $ or $\infty$. However, in the next section, we will reformulate these equations with auxiliary maps where~\eqref{L} is not necessarily satisfied, in the sense that solutions with $\xi_k = 0 $ or $\infty$ might be included, but if we have that all the $\xi_k \neq 0$ or $\infty$, then~\eqref{L} might be satisfied. Thus, in hindsight of the following section, we add the cases that $\xi_k = 0$ or $\infty$.
\end{rem}

\begin{proof}
	First, we show that if $z_i = z_j$, then the eigenfunction $u_{\vec{z}}(\vec{x})$ is trivial. This breaks into checking two case, where $A_{\sigma}$ is well-defined on $\mathbb{C}^N$ and when it is not. Indeed, note that if $z_i = z_j = 1$ or $p/q$ we have that $(p + q z_i z_j -z_i)/(p +q z_i z_j -z_j) = 0/0$, which not well-defined on $\mathbb{C}^N$. In order to handle the case that is not well-defined, we must work on a certain blow-up of $(\mathbb{CP}^1)^N$ where we resolve the auxiliary maps $\psi$ and $\phi$ used to reformulate the Bethe ansatz equations~\eqref{L} (see sections~\ref{sec:maps} and appendix~\ref{blow}) and the $A_{\sigma}$\rq{}s become well-defined. In any case, the important fact is that terms of the form $(p + q z_i z_j -z_i)/(p +q z_i z_j -z_j) = 0/0$ are well-defined on the blow-up of $(\mathbb{CP}^1)^N$, which are defined via a sort of generalized L\rq{}H\^opital\rq{}s rule where the value is given by the way that you approach the locus where $0/0$.\par
	In the case, where $z_i = z_j \neq 1, p/q$, we let $\tau \in S_N$ be the transposition that swaps $i$ and $j$. Then we have $A_{\sigma} = -A_{\sigma \tau}$ and for all $\vec{x}$ we have
\[
u_{\vec{\xi}}(\vec{x}) = \sum_{\sigma \in S_N} A_{\sigma} \prod_{j=1}^{N} \xi_{\sigma (j)}^{x_j} = \sum_{\sigma \in S_N} A_{\sigma \tau^2} \prod_{j=1}^{N} \xi_{\sigma \tau^2(j)}^{x_j} =\sum_{\sigma' \in S_N} A_{\sigma' \tau } \prod_{j=1}^{N} \xi_{\sigma' \tau(j)}^{x_{\tau(j)}} = \sum_{\sigma' \in S_N} - A_{\sigma'} \prod_{j=1}^{N} \xi_{\sigma'(j)}^{x_j} = - u_{\vec{z}}(\vec{x}).
\]
It follows that $\vec{z}$ is inadmissible, as claimed.\par
	On the hand, when $z_i = z_j = 1, p/q$ (and all other $z$\rq{}s are distinct), we have that $A_{\sigma} = - a A_{\sigma \tau}$ for some $a \in \mathbb{CP}^1$ by the resolution using the blow-up. In particular, we have that $(p +q z_i z_j -z_i)/(p +q z_i z_j -z_j) = a$. Then, by the same procedure as above, we have that $(1+ a)u_{\vec{z}}(\vec{x}) =0$. So, if we have that $a \neq -1$, it follows that $\vec{z}$ is inadmissible. Still, if $a =-1$, we look at the $i^{th}$ Bethe equation~\eqref{L}. We have that for $z_i =z_j =1$
	\begin{align*}
	1&= (-1)^{N} \prod_{k \neq j ,i} \frac{p + q  z_k - 1}{p +q z_k - z_k} \\
	&= (-1)^{N} \prod_{k \neq j ,i } \frac{ q  (z_k - 1)}{p(1 - z_k)}\\
	&=   (q/p)^{N-2},
	\end{align*}
or we have that for $z_i = z_j = p/q$
	\begin{align*}
	\left(\frac{p}{q} \right)^{L}&= (-1)^{N} \prod_{k \neq j ,i} \frac{p + p z_k - p/q}{p +p z_k - z_k} \\
	&= (-1)^{N} \prod_{k \neq j ,i } \frac{ (p/q) (q z_k - p)}{p - q z_k}\\
	&=   (p/q)^{N-2}.
	\end{align*}
So, we have a contradiction since we assume that $(q/p)^{N-1} \neq 1$ and $(p/q)^{L-2-N} \neq 1$, meaning that $z_i = z_j =1, p/q$ (and all the other $z$\rq{}s distinct) don\rq{}t lead to non-trivial solutions of the Bethe ansatz equations.\par
	The rest of the cases we must consider are when $z_{i_1} = \cdots = z_{i_k} = 1 , p/q$. In those cases, we arrive at similar conclusion via similar methods where we either have $A_{\sigma}$\rq{}s that lead to an inadmissible solution or we arrive at a contradiction where $(q/p)^{N - k} = 1$  or $(p/q)^{L -N + k} \neq 1$ for $k =2 , \dots, N-1$. Note that, when $k =N$, there is no contradiction since $(q/p)^{N - N} = 1$, and thus we have that the stationary solution with $z_1 = \cdots = z_N =1$ is indeed an admissible solution. This establishes the forward direction of the lemma.\par
	Now, suppose that $\vec{z}$ is an inadmissible solution. We show that there must be $i \neq j$ such that $z_i = z_j$, and we do this by induction on $N$. In particular, we take an inadmissible solution $\vec{z}$ and assume, for contradiction, that all the parameters are finite, non-zero, and pairwise distinct. Then, we show that this leads to an inadmissible solution with $N-1$ parameters $z_2, \dots, z_N$, and from the $N-1$ case we must have that two of the $N-1$ parameters must the same, which contradicts our original assumption that all the parameters are distinct. Therefore, an inadmissible solution must have $i \neq j$ such that $z_i =z_j$.\par
	First, we take $N$ parameters $z_1, \dots, z_N$,  finite, non-zero, and pairwise distinct. Then, we note that the equation
	\begin{equation}\label{eq:putnam}
	c_1 z_1^{x_1} + \cdots + c_N z_N^{x_1} = 0, \hspace{5mm} x_1 \in \mathbb{Z},
	\end{equation}
where $c_i \in \mathbb{C}$, implies that all the coefficients must be zero (i.e. $c_1 =\cdots = c_N = 0$). This follows by reducing~\eqref{eq:putnam} to the same equation with $N-1$ variables instead of $N$ and then applying an induction argument. So, we divide equation~\eqref{eq:putnam} by $z_1^{x_1}$ and take the difference with the corresponding equation with $x_1 + 1$ instead of $x_1$. We have that
	\[
	c_2\left(1- \frac{z_2}{z_1}\right) (z_2/z_1)^{x_1} + \cdots + c_N\left(1 - \frac{z_N}{z_1}\right)(z_N/z_1)^{x_1} = 0.
	\]
Since all the parameters are non-zero and distinct, we have that all of the terms $z_2/z_1, \dots , z_N/z_1$ are finite, non-zero, and pairwise distinct. So, by induction, we have that all the coefficients $c_2(1- z_2/z_1), \dots, c_N(1- z_N/z_1)$ are zero, and since all the parameters $z_1, \dots, z_N$ are pairwise distinct, we have that $c_2 = \cdots = c_N =0$. Lastly, note that this reduces~\eqref{eq:putnam} to $c_1 z_1^{x_1} = 0$, which clearly implies that $c_1 =0$, establishing the base case and the fact that all the coefficients of~\eqref{eq:putnam} must be identically zero for the equation to hold if all the parameters are finite, non-zero, and pairwise distinct.\par
	At this stage, we continue with the assumption that all the parameters $z_1, \dots, z_N$ are distinct, and we rewrite $u_{\vec{z}}(\vec{x}) = 0$ making the dependence on $x_1$ explicit. We have
	\begin{equation}
	u_{\vec{\xi}}(\vec{x}) = \sum_{\sigma \in S_N} A_{\sigma} \prod_{i =1}^{N} z_{\sigma(i)}^{x_i} = \sum_{j=1}^{N} \left ( \underset{\sigma^{-1}(j)=1}{\sum_{\sigma \in S_N}} A_{\sigma} \prod_{i =2}^{N} z_{\sigma(i)}^{x_i} \right) z_j^{x_1} =0.
	\end{equation}
Then, by the argument in the previous paragraph, we have that
	\begin{equation}\label{eq:inad_ind}
	 \underset{\sigma^{-1}(j)=1}{\sum_{\sigma \in S_N}} A_{\sigma} \prod_{i =2}^{N} z_{\sigma(i)}^{x_i} =0 
	\end{equation}
for every $j =1, \dots, N$. Moreover, for $j=1$, note that $\{\sigma \in S_N \mid \sigma(1) = 1 \} \cong S_{N-1}$, and we can prove our result by induction on $N$. In particular, if we have that $u_{\vec{z}}(\vec{x}) = 0$ (where $A_{\sigma}$'s are any constants not necessarily give by~\eqref{amplitudes}) for $N-1$ implies that two of the parameters in $\vec{z}$ must be the same, then we apply that  result to~\eqref{eq:inad_ind} and we have that two of the parameters in $z_2, \dots, z_N$ must be the same, contradicting the original assumption that all the parameters $z_1, \dots, z_N$ are distinct. Therefore, there is no inadmissible solution with finite, non-zero, and pair-wise distinct parameters $\vec{z}$.\par
It only remains to show the result for the base case $N=2$. That is, if
\[
A_{12}\xi_1^{x_1}\xi_2^{x_2} + A_{21}\xi_1^{x_2}\xi_2^{x_1} = 0
\]
then $\xi_1 = \xi_2$. Recall that we showed this in the two-particle case. This establishes the lemma in the backwards direction.
\end{proof}

\subsection{Maps \texorpdfstring{$\psi$}{psi} and \texorpdfstring{$\phi$}{phi}}
\label{sec:maps}

Now that we know how to classify the inadmissible solutions, we'll count the number of admissible solutions. For example, we count all the solutions and subtract the solutions such that any two $\xi_i$ are equal, but in doing so we over count the number of inadmissible solutions as in the case when three $\xi_i$ are equal. We must then take care to do the proper count. In fact, what we want to count is the cardinality of the set $\pi^{-1} (p)$ for a generic p, and we can use the Lefschetz Theorem~\ref{lefschet}. Following Langlands and Saint-Aubin~\cite{Langlands}, we introduce $\binom{N}{2}$ variables and decompose the Bethe ansatz equations~\eqref{L} into
\begin{equation} \label{w}
\omega_{(k,l)} = - \frac{p + q\xi_k \xi_l -\xi_k}{p + q\xi_k \xi_l -\xi_l} \text{ for } 1 \leq k < l \leq N
\end{equation}
\begin{equation} \label{R}
R(\xi_i) = \left( \prod_{1 \leq j < i} \omega_{(j,i)}\right) \left( \prod_{i < j \leq N} \omega_{(i,j)}^{-1}\right) \qquad \text{ for } i =1, \ldots , N
\end{equation}%
where $R: \mathbb{CP}^1 \rightarrow \mathbb{CP}^1 $ has $L$ zeros at $\alpha_i \in \mathbb{CP}^1$ and $L$ poles at $\beta_i \in \mathbb{CP}^1$ with $\beta_i = \frac{1}{q} - \frac{p}{q \alpha_i}$ which is a generalization of the function $\xi \mapsto \xi^L$. Now, note that our equations~\eqref{w} and~\eqref{R} are over the space $\mathbb{C}^N \times \mathbb{C}^{\binom{N}{2}}$ except for the divisor where the numerator and denominator of the right hand side of~\eqref{w} vanish. In order to use the Lefschetz theorem~\ref{lefschet} we need to compactify the spaces we are working with and make the right-hand of~\eqref{w} well-defined.

As in the $N=2$ case, we will count the set of solutions to the system of equations~\eqref{w} and~\eqref{R} by counting the number of coincidences of two smooth maps with compact domain and range. Again, we start with the domain and range of our maps to be $X:=( \mathbb{CP}^1)^{\frac{N(N+1)}{2}}$ and denote the projection maps into its $\frac{N(N+1)}{2}$ factors by 
\begin{align*}
\pi_{i}: X &\rightarrow \mathbb{CP}^1 \qquad \text{for} \qquad i=1, \ldots, N \\
\pi_{(k,l)}:X &\rightarrow \mathbb{CP}^1 \qquad \text{for} \qquad 1 \leq k<l \leq N
\end{align*}
Define the rational map $\psi_N: X \dashedrightarrow X$ by 
\[
\psi_N: (\xi, \omega) \mapsto ({\xi'} , {\omega'})
\]
with 
\begin{align*}
\xi &= ([\xi_0^i: \xi_1^i])_{i=1}^N\\
{\xi}' &= ([{\xi'}_0^i : {\xi'}_1^i])_{i=1}^N\\
\omega &= ([\omega_0^{(k,l)} : \omega_1^{(k,l)}])_{1 \leq k<l\leq N} \\
{\omega}' &= ([{\omega'}_0^{(k,l)} : {\omega'}_1^{(k,l)}])_{1 \leq k<l\leq N}
\end{align*}
and
\begin{align*}
[{\xi'}_0^i : {\xi'}_1^i]&= \left[ \prod_{1 \leq j < i} \omega_0^{(j,i)} \prod_{i < j \leq N} \omega_1^{(i,j)} :  \prod_{1 \leq j < i} \omega_1^{(j,i)} \prod_{i < j \leq N} \omega_0^{(i,j)} \right] \\
[{\omega'}_0^{(k,l)} : {\omega'}_1^{(k,l)}]&= [ p \xi_0^k \xi_0^l +q \xi_1^k \xi_1^l - \xi_0^k \xi_1^l : - (p \xi_0^k \xi_0^l +q \xi_1^k \xi_1^l - \xi_0^l \xi_1^k )]
\end{align*}
Note that $\psi_N: X \dashedrightarrow X$ is well-defined away from the codimension 2 subvariety
\[
Z := \bigcup_{1 \leq k < l \leq N} \left( Z_{(k,l)}^0 \cup Z_{(k,l)}^1 \right),
\]
where $Z_{(k,l)} : = \left\{pt \in X | \pi_k (pt) = \pi_l(pt) =[2q : 1 + (-1)^i(1 - 2 p)] \right\}$. Also, we define the smooth map $\phi_N: X \rightarrow X$ by 
\[
\phi_N: (\xi, \omega) \mapsto ({\xi''}, {\omega''})
\]%
with
\begin{align*}
\xi &= ([\xi_0^i: \xi_1^i])_{i=1}^N\\
{\xi}'' &= ([{\xi''}_0^i : {\xi''}_1^i])_{i=1}^N\\
\omega &= ([\omega_0^{(k,l)} : \omega_1^{(k,l)}])_{1 \leq k<l\leq N} \\
{\omega}'' &= ([{\omega''}_0^{(k,l)} : {\omega''}_1^{(k,l)}])_{1 \leq k<l\leq N}
\end{align*}%
and
\begin{align*}
[{\xi'}_0^i : {\xi'}_1^i]&= R \left( [\xi_0^i: \xi_1^i] \right)\\
[{\omega'}_0^{(k,l)} : {\omega'}_1^{(k,l)}]&= [\omega_0^{(k,l)} : \omega_1^{(k,l)}]
\end{align*}%
where $R: \mathbb{CP}^1 \rightarrow \mathbb{CP}^1$ is the function defined by equation~\eqref{R}.

Now define a chart $U := (\xi_0^i \neq 0 , \omega_0^{(k,l)} \neq 0 \mid i =1, \ldots , N \text{ and } 1 \leq k<l \leq N )$ and $f: U \rightarrow \mathbb{C}^{\frac{N(N+1)}{2}}$ with $(\xi ,\omega) \mapsto (\vec{\xi}, \vec{\omega})$ with
\begin{align*}
\xi &= ([\xi_0^i: \xi_1^i])_{i=1}^N\\
\vec{\xi} &= (\xi_i)_{i=1}^N\\
\omega &= ([\omega_0^{(k,l)} : \omega_1^{(k,l)}])_{1 \leq k<l\leq N} \\
\vec{\omega} &= (\omega_{(k,l)})_{1 \leq k<l\leq N}
\end{align*}%
and 
\begin{align*}
\xi_i &= {\xi_1^i}/{\xi_0^i} \\
\omega &= {\omega_1^{(k,l)}}/{\omega_0^{(k,l)}}
\end{align*}%
Then, note that the system of equations~\eqref{w} and~\eqref{R} on $\mathbb{C}^{\frac{N(N+1)}{2}} - f\left(U \cap Z \right)$ is equivalent to 
\begin{equation} \label{corr3}
f \circ \psi_N \circ f^{-1} (\vec{\xi} , \vec{\omega}) = f \circ \phi_N \circ f^{-1} (\vec{\xi} , \vec{\omega}).
\end{equation}

Thus, our strategy is to count the admissible solutions of $\psi_N(p) = \phi_N(p)$ and discard the solutions that are not on $U$. We formalize this count in the next section. 
But as we have mentioned in the $N=2$ case, we first need to obtain a smooth resolution of $\psi_N : (\mathbb{P}^1)^{\frac{N(N+1)}{2}} \dashedrightarrow (\mathbb{P}^1)^{\frac{N(N+1)}{2}}$ to apply Theorem~\ref{lefschet}. We do this resolution by applying a sequence of blow-ups to our domain which results in a smooth space $C$ with a projection map $\pi: C \rightarrow X$ such that $C - \pi^{-1}(Z) \cong X- Z$. Then, we define smooth maps $\tilde{\psi}_N , \tilde{\phi}_N: C \rightarrow X$ such that these maps agree with our maps $\psi_N $ and $\phi_N$ on $C - \pi^{-1}$. We give the details of this construction in Appendix~\ref{blow}. 

Therefore, we have that the solutions of the Bethe ansatz equations (admissible and inadmissible) are the same as the solutions to the equation
\[
\tilde{\phi}_N(pt) = \tilde{\psi}_N(pt) \qquad \text{ for } pt \in C
\]

\begin{proposition} \label{directsum}
Let $\tilde{C} = \mathrm{Blow}_Z(C)$. There exist smooth maps $\tilde{\psi}, \tilde{\phi}: \tilde{C} \to X$ that agree with $\psi, \phi$ away from the exceptional divisor (i.e. away from the subvariety where we have performed the blow-up). Furthermore, if we have $H_*(\tilde{C}) = H_*(C) \oplus H$, then $\tilde{\phi}_*(H) = 0$.
\end{proposition}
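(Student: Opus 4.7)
The proof naturally splits into two parts: constructing the smooth extensions, and checking the homological vanishing.

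For the existence of smooth extensions, my plan is to handle $\tilde{\phi}$ trivially and focus the work on $\tilde{\psi}$. Since $\phi_N: X \to X$ is globally defined (its only non-polynomial component $R$ is itself a morphism $\mathbb{CP}^1 \to \mathbb{CP}^1$), the composition $\tilde{\phi} := \phi \circ \pi$ with the blow-down $\pi: \tilde{C} \to C$ is automatically smooth. The substantive content is for $\tilde{\psi}$, whose indeterminacy locus is $Z$. Locally near a component $Z_{(k,l)}$, the only obstruction lies in the $(k,l)$-th $\omega$-coordinate of $\psi_N$; its two projective sections
\[
p\xi_0^k\xi_0^l + q\xi_1^k\xi_1^l - \xi_0^k\xi_1^l \quad \text{and} \quad -(p\xi_0^k\xi_0^l + q\xi_1^k\xi_1^l - \xi_0^l\xi_1^k)
\]
both vanish to first order along $Z_{(k,l)}$. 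The classical principle of resolving base points of a linear system, combined with the smoothness and codimension-two property of each $Z_{(k,l)}$, ensures that blowing up along $Z_{(k,l)}$ makes their ratio well-defined on the exceptional $\mathbb{P}^1$-bundle. To handle the different components of $Z$ simultaneously, one iterates and takes proper transforms where distinct $Z_{(k,l)}$ intersect non-transversely. This combinatorial bookkeeping of the iterated blow-up is the principal technical obstacle and is carried out in Appendix~\ref{blow}.

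For the homology part, I would invoke the classical description of the cohomology of a blow-up of a smooth subvariety $Z \subset C$ of codimension $c$:
\[
H^*(\tilde{C}) \cong H^*(C) \oplus \bigoplus_{j=1}^{c-1} H^{*-2j}(Z),
\]
where the extra summand is spanned by classes of the form $\pi^*\alpha \cdot [E]^j$ for $\alpha \in H^*(Z)$ and $E$ the exceptional divisor. Dualizing, the complementary summand $H \subset H_*(\tilde{C})$ consists of homology classes supported on $E$ that are not pulled back from $C$. The structural property that drives the proof is that these extra classes lie in the kernel of the pushforward $\pi_*: H_*(\tilde{C}) \to H_*(C)$: they are essentially fiber-direction classes in the exceptional $\mathbb{P}^1$-bundle $E \to Z$, and $\pi$ collapses those fibers.

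Given this, the vanishing $\tilde{\phi}_*(H) = 0$ is then immediate from the factorization $\tilde{\phi} = \phi \circ \pi$:
\[
\tilde{\phi}_*(H) = \phi_*(\pi_*(H)) = \phi_*(0) = 0.
\]
The hard part of the overall argument is the first: verifying that the iterated blow-up construction genuinely produces a smooth space on which $\tilde{\psi}$ extends regularly, and that the resulting $\tilde{C}$ admits the standard blow-up homology decomposition so that the extra summand $H$ is really in $\ker \pi_*$. Once that geometric setup is in place, the homological vanishing statement is essentially formal.
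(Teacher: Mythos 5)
Your proposal is correct and follows essentially the same route as the paper: the existence of the smooth extensions is deferred to the explicit iterated blow-up construction in Appendix~\ref{blow}, and the homological statement follows from the Griffiths--Harris decomposition of the (co)homology of a blow-up together with the factorization $\tilde{\phi} = \phi \circ \pi$. If anything, your phrasing via the pushforward, $\tilde{\phi}_*(H) = \phi_*(\pi_*(H)) = 0$ since $H \subseteq \ker \pi_*$, addresses the claim $\tilde{\phi}_*(H)=0$ more directly than the paper's dual statement about $\tilde{\phi}^*(H_*(X))$ being disjoint from $H$.
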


\begin{proof}
See Appendix~\ref{blow} for the existence of the maps. For the second statement, it follows from the general properties of the blow-up that \[H_*(C) = \text{Im } \pi^* \oplus H\] where $\text{Im} \pi^* \cong H_*(X)$~\cite{GriffithsHarris}. Moreover, since $\tilde{\phi} = \phi \circ \pi$, we have 
\[
\tilde{\phi}^*(H_*(X)) \cap H = \emptyset. \qedhere
\]
\end{proof}

\subsection{Lefschetz theorem}

We reduced the problem of finding eigenstates of the Bethe ansatz to finding solutions of the equation
 \begin{equation} \label{correspondence}
 \phi(pt) = \psi(pt).
 \end{equation} 
In algebraic topology, one can answer how many such solutions exist by the Lefschetz theorem.

Our cohomology spaces have dimensions
\begin{align*}
\dim H_{2i} (X) &= \binom{N + \binom{N}{2}}{i} \\
\dim H_{2i+1} (X) &= 0 .
\end{align*}

To compute the trace of the map $\psi_i \phi^i$, we choose a basis of $H_{*} (X)$. In this setting, we can choose a basis of $H_{*} (X)$ as 
\[
\{ e_S \otimes e_T \mid S \subseteq [N] \text{ and } T \subseteq [N]^{(2)} \}
\] 
where $e_S \otimes e_T$ is represented by the submanifold
\[
\left( \prod_{i \in S}  \mathbb{CP}^1_i \right) \times \left( \prod_{(k,l) \in T}  \mathbb{CP}^1_{k,l} \right).
\]

We would like to compute the induced maps $\psi_i$ and $\phi^i$. Still, we have yet to determine a basis for $H_*(C)$, but we have, from the Appendix~\ref{blow} and Proposition~\ref{directsum} on blow-ups, that
\[
H_* (C) = H_*(X) \oplus H.
\]
Moreover, we have the commutative diagram

\[
\begin{tikzcd}
C \arrow{r}{\tilde{\phi}} \arrow{d}{\pi} & X \\
X \arrow{ur}[swap]{\phi}
\end{tikzcd}
\]

where $\pi: C \rightarrow X$ is the projection from the blow-up and ${\phi} : (\vec{\xi}, \vec{w}) \mapsto (\vec{\xi}^L, \vec{w})$ on an affine chart. So, we have that $\tilde{\phi}^* = \pi^* \circ \phi^*$ with $\pi^*(H_*(X)) \cap H = \emptyset$. Thus, we can factor the maps

\[
\begin{tikzcd}
H_*(X) \arrow{r}{\tilde{\phi}^*} \arrow{dr}{\phi^*} & H_*(C) \arrow{r}{\tilde{\psi}_*} & H_*(X)\\
& H_*(X) \arrow{u}[swap]{\pi^*} \arrow{ur}[swap]{\psi_*}
\end{tikzcd}
\]

This means that, when we are computing $\tilde{\psi}_*$, we only need to consider the image of the elements in $H_{*}(X) \subseteq H_{*}(C)$ where we have already established a basis. Now, let's consider some examples. For instance,
\[
1 \overset{\phi_*}{\mapsto} L^N \cdot 1.
\]
We have that the element $1 \in H_{0}(X)$ is represented by a point (i.e. $1= [pt]$). Then, 
\[
\phi^*[pt] = [\phi^{-1} (pt)] = [L^N pts] = L^N [pt] = L^N \cdot 1 .
\]
Also, we have that 
\[
e_i \overset{\phi^*}{\mapsto} L^{N-1} \cdot e_i.
\]%
Indeed, we can represent $e_i =[(1, \ldots, 1, \xi_i, 1, \ldots , 1 )]$ and we have that
\[
\phi^{-1}((1, \ldots, 1, \xi_i, 1, \ldots , 1 )) = \bigcup_{k_j = 1, \dots , L} (e^{2 \pi i k_1/L }, \ldots, \xi_i, \ldots , e^{2 \pi i k_N/L } ).
\]
In general, we have the following result.

\begin{lemma} \label{soma}
\[
e_S \otimes e_T \overset{\phi^*}{\mapsto} L^{N-|S|} \, e_S \otimes e_T
\]
\end{lemma}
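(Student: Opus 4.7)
The plan is to exploit the product structure of $X$ and $\phi$. We have $X = \prod_{i \in [N]} \mathbb{CP}^1_i \times \prod_{(k,l) \in [N]^{(2)}} \mathbb{CP}^1_{(k,l)}$, and from the definition of $\phi_N$ together with the chart formula $f\circ\phi_N\circ f^{-1}(\vec{\xi},\vec{\omega}) = (R(\xi_1),\ldots,R(\xi_N),\vec{\omega})$, the map $\phi$ is just $R$ on each $\mathbb{CP}^1_i$ factor with $i \in [N]$ and the identity on each $\mathbb{CP}^1_{(k,l)}$ factor. Since $R$ is a rational function with $L$ zeros and $L$ poles, its degree as a self-map of $\mathbb{CP}^1$ is $L$.

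Next, I would invoke the K\"unneth formula to decompose $H_*(X) = \bigotimes_j H_*(\mathbb{CP}^1_j)$. Because $\phi$ is a product of maps on the individual factors, the induced map $\phi^*$ on homology (i.e.\ the Poincar\'e dual of the cohomological pullback) decomposes as the tensor product of the induced maps on each factor. For a single factor $\mathbb{CP}^1$ and a self-map $f$ of degree $d$, the cohomological pullback multiplies the generator of $H^2$ by $d$ and fixes $H^0$; under Poincar\'e duality on $\mathbb{CP}^1$ this becomes the homology map $[pt] \mapsto d\cdot [pt]$ and $[\mathbb{CP}^1] \mapsto [\mathbb{CP}^1]$.

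Now recognize the basis element: $e_S \otimes e_T$ is represented by the submanifold where the factors indexed by $S \cup T$ are free and the remaining factors are fixed at a point, so under K\"unneth it is the class $\bigotimes_{j \in S \cup T} [\mathbb{CP}^1_j] \otimes \bigotimes_{j \notin S \cup T} [pt]_j$. Applying the tensor-product action of $\phi^*$ factor by factor: (i) each $\mathbb{CP}^1_{(k,l)}$ factor contributes $1$ because $\phi$ is the identity there; (ii) each $\mathbb{CP}^1_i$ with $i \in S$ contributes $[\mathbb{CP}^1_i] \mapsto [\mathbb{CP}^1_i]$, a factor of $1$; (iii) each $\mathbb{CP}^1_i$ with $i \in [N]\setminus S$ contributes $[pt]_i \mapsto L\cdot [pt]_i$, a factor of $L$. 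Since there are exactly $N - |S|$ factors of type (iii), we obtain $\phi^*(e_S \otimes e_T) = L^{N-|S|}\, e_S \otimes e_T$, and the class on the right is again $e_S \otimes e_T$ because each individual factor is mapped to a scalar multiple of itself.

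There is no serious obstacle here: once one adopts the convention that $\phi^*$ on homology means the Poincar\'e dual of the cohomological pullback, the statement reduces to a K\"unneth computation combined with the elementary fact that a degree-$L$ endomorphism of $\mathbb{CP}^1$ pulls a point class back to $L$ times a point class. The only point requiring a sentence of justification is the degree count for $R$, which follows immediately from equation~\eqref{R} and the hypothesis that the $\alpha_i$ are distinct.
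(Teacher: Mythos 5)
Your argument is correct and is essentially the paper's own: the paper establishes this lemma by computing preimages of the representing submanifolds ($\phi^{*}[pt]=[\phi^{-1}(pt)]=L^{N}[pt]$, etc.) and noting that $\phi$ acts with degree $L$ on each $\mathbb{CP}^1_i$ factor and trivially on the $\mathbb{CP}^1_{(k,l)}$ factors, which is exactly your K\"unneth factor-by-factor degree count phrased geometrically. No gap.
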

Now, let's consider the $\psi_*$ map. For example, we have
\[
1 \overset{\psi_*}{\mapsto} 1
\]
since $\psi_*[pt] = [\psi (pt) ] = [pt] =1$. Also, we have that 
\[
e_i \overset{\psi_*}{\mapsto} \sum_{i < j} e_{ij} + \sum_{i > j} e_{ji}
\]%
since we have that the image of $\mathbb{CP}^1_i$ is 
\[
\left( \bigcup_{i > j} \mathbb{CP}^1_{ij} \right) \cup \left( \bigcup_{i < j} \mathbb{CP}^1_{ji} \right).
\]%
Similarly, we have that 
\[
e_{ij} \overset{\psi_*}{\mapsto} e_i + e_j
\]
since the image of $\mathbb{CP}^1_{ij}$ is $\mathbb{CP}^1_i \cup \mathbb{CP}^1_j$. Also, we consider one more instructive example where $N=3$, where we have that 
\[
e_1\otimes e_2 \otimes e_3 \otimes e_{1,2} \otimes e_{1,3} \otimes e_{23} \overset{\psi_*}{\mapsto} 0.
\]
This follows from the fact that we may represent $e_1\otimes e_2 \otimes e_3 \otimes e_{1,2} \otimes e_{1,3} \otimes e_{23} =[X]$, we have that $\dim_{\mathbb{R}} X = 12$, and $\dim_{\mathbb{R}} \psi (X) < 12$. The latter can easily be seen since the coordinates of the image are not independent. Note that
\[
(w_{12} w_{13}) (w_{12}^{-1} w_{2,3}) (w_{13}^{-1} w_{23}^{-1}) = 1.
\]
Therefore, the map $\psi_*$ is not as straightforward as $\phi^*$, but in general, this is the only situation that we will encounter that will give us a zero in our computations.


\begin{rem}
In the rest of the paper, we will fix the maps $\phi,\psi : C \rightarrow X$ and we will compute the Lefschetz number for different restrictions of the maps $\psi$ and $\phi$ to certain subvarieties $C(\mathcal{A},\mathcal{B}) \subseteq C$ and $X(\mathcal{A}, \mathcal{B}) \subseteq X$, introduced in the next section. For convenience, we will avoid the cumbersome notation $\lambda(\psi|_{C(\mathcal{A},\mathcal{B})} , \phi_{C(\mathcal{A},\mathcal{B})})$
and simply write
\[
\lambda(\mathcal{A},\mathcal{B}).
\]
\end{rem}

\section{Proof of Theorem~\ref{mainthm}}

\subsection{Computing traces by counting trees}

In this section, we show how to compute the Lefschetz numbers of the subvarieties containing inadmissible solutions by enumerating rooted forests. 

\begin{proposition}
\label{trees}
Given $\psi , \phi : C \rightarrow X$ as above, with the induced homology maps $\psi_i: H_i (C) \rightarrow H_i(X)$ and $\phi^i : H_i (X) \rightarrow H_i(C)$, we have
\[
\sum_{i=0}^{\dim X} (-1)^i \, \mathrm{Tr} (\psi_i \phi^i ) = \sum_{f\in \mathcal{F}_N} L^{n(f)},
\]
where $\mathcal{F}_N$ denotes the set of planted forests on the vertex set $[N]$ and $n(f)$ denotes the number of components of a forest $f$.
\end{proposition}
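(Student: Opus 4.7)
The plan is to reduce the Lefschetz trace to an intersection count and then identify the surviving contributions as planted forests. By Proposition~\ref{directsum}, the image of $\phi^i$ lies in the summand $H_*(X) \subseteq H_*(C)$, so each $\mathrm{Tr}(\psi_i \phi^i)$ may be computed by restricting $\psi_i$ to $H_*(X)$. Since all homology lives in even degree, the signs $(-1)^i$ collapse to $+1$. Together with Lemma~\ref{soma}, which gives $\phi^*(e_S \otimes e_T) = L^{N-|S|}\, e_S \otimes e_T$, the Lefschetz sum becomes
\[
\sum_{S \subseteq [N],\, T \subseteq [N]^{(2)}} L^{N-|S|}\, c_{S,T}, \qquad c_{S,T} := [e_S \otimes e_T]\bigl(\psi_*(e_S \otimes e_T)\bigr).
\]

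Next I would interpret $c_{S,T}$ geometrically. Represent $e_S \otimes e_T$ by the subvariety $V_{S,T} \subseteq X$ whose coordinates indexed by $S \cup T$ vary freely and whose remaining coordinates take generic fixed values $(a_i)_{i \notin S}$, $(b_{kl})_{(k,l) \notin T}$. Then $c_{S,T}$ equals the intersection number $\psi_*[V_{S,T}] \cdot [V_{S^c, T^c}]$, which counts the points of $V_{S,T}$ whose $\psi$-image satisfies $\xi'_i = c_i$ for each $i \in S$ and $\omega'_{kl} = d_{kl}$ for each $(k,l) \in T$, for generic constants $c_i, d_{kl}$. Because each $\xi'_i$ depends only on $\omega$-coordinates and each $\omega'_{kl}$ depends only on $\xi_k, \xi_l$, this system decouples into two independent pieces: a subsystem of $|S|$ monomial equations in the free variables $(\omega_{(i,j)})_{(i,j) \in T}$, and a subsystem of $|T|$ M\"obius-type relations in the free variables $(\xi_j)_{j \in S}$.

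The main combinatorial step is to show that both subsystems admit a unique solution precisely when $(T, [N] \setminus S)$ is a planted forest --- that is, when $([N], T)$ is a forest and $[N] \setminus S$ meets each connected component in a single vertex --- in which case $c_{S,T} = 1$; otherwise $c_{S,T} = 0$. Taking logarithms turns the monomial subsystem into a linear system whose coefficient matrix is the signed incidence matrix of $([N], T)$ with the rows indexed by $[N] \setminus S$ removed, and this reduced incidence matrix has determinant $\pm 1$ exactly under the planted-forest condition. The M\"obius subsystem is solved by propagation along the edges of $T$ starting from the fixed root values: a cycle in $T$ produces an overdetermined constraint that fails generically, while a rootless component makes the system underdetermined, in which case a dimension count shows that $\psi|_{V_{S,T}}$ fails to be generically finite and hence $\psi_*[V_{S,T}] = 0$. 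Summing the surviving contributions, and using $n(f) = |R(f)| = N - |S|$ for a planted forest $f$ with root set $R(f) = [N] \setminus S$, yields $\sum_{f \in \mathcal{F}_N} L^{n(f)}$. The main obstacle is the careful bookkeeping of the degenerate cases, especially verifying that all underdetermined configurations contribute zero to the intersection number rather than an infinite or indeterminate count.
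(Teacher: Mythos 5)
Your proposal is correct and its overall skeleton coincides with the paper's: both restrict to the summand $H_*(X) \subseteq H_*(C)$ via Proposition~\ref{directsum}, use that $\phi^*$ acts diagonally with eigenvalue $L^{N-|S|}$ (Lemma~\ref{soma}), and reduce the proposition to showing that the diagonal coefficient $c_{S,T}$ of $\psi_*$ equals $1$ exactly when $(T,[N]\setminus S)$ is a planted forest and $0$ otherwise. Where you diverge is in how $c_{S,T}$ is evaluated. The paper splits $X$ into the $\xi$-factor and the $\omega$-factor (observing, as you do, that $\psi$ swaps them, forcing $|S|=|T|$), further decomposes according to the connected components of the graph $([N],T)$, and then invokes an inner lemma --- stated without proof --- asserting that for connected $([N],T)$ the relevant pairings equal $\delta_{|S|,N-1}$. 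You instead compute $c_{S,T}$ as an intersection number and solve the two decoupled systems explicitly: the monomial system via the reduced signed incidence matrix of $([N],T)$ with the root rows deleted, whose unimodularity is equivalent to the planted-forest condition, and the M\"obius system by propagation from the fixed root values. This is more work, but it actually supplies a proof of the step the paper leaves unjustified. The one place to be careful is the degenerate bookkeeping you flag yourself: a vanishing coefficient can arise either because $\psi_*[V_{S,T}]=0$ outright (dimension drop in the image, as in the paper's $N=3$ example with $T=[3]^{(2)}$) or because the pushforward class is nonzero but the generic fibre over $V_{S^c,T^c}$ is empty (an overdetermined consistency condition along a cycle or a multiply-rooted component); both mechanisms occur, both give zero, and once $|S|=|T|$ is imposed a rootless component always forces one of them, so the argument closes.
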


\begin{proof}
First, note that from the commutative diagram

\[
\begin{tikzcd}
C \arrow{r}{\phi} \arrow{d}{\pi} & X \\
X \arrow{ur}[swap]{\phi'}
\end{tikzcd}
\]

where $\pi$ is the projection from the blow-up and by definition $\phi = \phi' \circ \pi$. We obtain the following commutative diagram in homology

\[
\begin{tikzcd}
H_*(X) \arrow{r}{\phi^*} \arrow{dr}[swap]{(\phi')^*} & H_*(C) \arrow{r}{\psi_*} & H_*(X)\\
& H_*(X) \arrow{u}[swap]{\pi^*} \arrow{ur}[swap]{(\psi_*)|_{H_*(X)}}
\end{tikzcd}
\]

In Appendix~\ref{blow} and Proposition~\ref{directsum}, we explain that indeed $\pi_*$ is indeed an orthogonal projection. From this, we have that 
\[
\mathrm{Tr}(\psi_* \phi^*) = \mathrm{Tr}((\psi_*)|_{H_*(X)} (\phi')^*).
\]
Thus, we only consider computations on the direct summand $H_*(X)$ of $H_*(C)$, and we will abuse notation and use $ (\phi')^*$, $(\psi_*)|_{H_*(X)}$ and $\phi^*$, $\psi_*$ interchangeably, respectively.

Next, we introduce a basis for $H_*(X)$ via Kunneth's theorem. That is, given the inclusion maps of the different factors of $\mathbb{CP}^1$ in $X$
\begin{align*}
 j_i &: \mathbb{CP}^1 \hookrightarrow X \qquad \text{ for } i =1, \ldots, N \\
j_{(k,l)} &: \mathbb{CP}^1 \hookrightarrow X \qquad \text{ for } 1 \leq i < j \leq N
\end{align*}%
we define elements in $H_2(X)$
\begin{align*}
e_i &:= (j_i)_* ([\mathbb{CP}^1]) \text{ for } i =1, \ldots, N \\
e_{(k,l)} &:= (j_{(k,l)})_* ([\mathbb{CP}^1]) \qquad \text{ for } 1 \leq i < j \leq N
\end{align*}
and given $S \subseteq [N]$ and $T \subseteq [N]^{(2)}$ we define elements in $H_{2|S|}(X)$ and $H_{2|T|}(X)$, respectively

\begin{align*}
e_S & := \bigotimes_{i \in S} e_i\\
e_T & := \bigotimes_{(k,l) \in T} e_{(k,l)}.
\end{align*}

Then, by Kunneth's theorem we have a basis for $H(X)$
\[
\mathbb{B} := \{ e_S \otimes e_T \mid S \subseteq [N] \text{ and } T \subseteq [N]^{(2)} \}
\]
and note that $\deg (e_S \otimes e_T) = 2|S|+ 2|T|$.

Moreover, for convenience, we define an inner product 
\[
\langle \,\cdot\,, \,\cdot\, \rangle : H(X)^{\otimes 2} \rightarrow \mathbb{C} 
\]
such that $\mathbb{B}$ is an orthonormal basis. Then, we have that
\begin{align*}
\sum_{i=0}^{\dim_{\mathbb{R}}X} (-1)^i Tr (\psi_i \phi^i ) &= \sum_{b \in \mathbb{B}} (-1)^{\deg(b)} \langle\psi_* \phi^*(b) , b\rangle\\
&= \sum_{b \in \mathbb{B}} (-1)^{\deg(b)} \langle\psi_* (b ), \phi^*(b)\rangle
\end{align*}
where the last equality comes from the fact that $\phi^*$ is represented by a diagonal matrix under the basis $\mathbb{B}$. Also, it is clear that for every basis element we have 
\[
\phi^*(e_S \otimes e_T) = L^{N -|S|} \, e_S \otimes e_T.
\]

Thus, we concentrate on computing 
\[
\langle\psi_* (e_S \otimes e_T ), e_S \otimes e_T\rangle.
\]

First, we define $X_1 = (\mathbb{CP}^1)^N$ and $X_2 = (\mathbb{CP}^1)^{\binom{N}{2}}$ and we write $X = X_1 \times X_2$. Now, given inclusions 
\begin{align*}
j_1 &: X_1 \hookrightarrow X\\
j_2 &: X_2 \hookrightarrow X
\end{align*}
and projections
\begin{align*}
\pi_1 &: X \rightarrow X_1 \\
\pi_2 &: X\rightarrow X_2
\end{align*}
we consider
\begin{align*}
\psi_{11} &: X_1 \rightarrow X_1 \\
\psi_{12} &: X_1 \rightarrow X_2 \\
\psi_{21} &: X_2 \rightarrow X_1 \\
\psi_{22} &: X_2 \rightarrow X_2
\end{align*}%
and we note that $\psi_{11}$ and $\psi_{22}$ are constant. Thus, we have that 
\[
 \langle\psi_* (e_S \otimes e_T ), e_S \otimes e_T\rangle = \langle(\psi_{12})_*(e_S), e_T\rangle \, \langle(\psi_{21})_*(e_T) , e_S\rangle
\]

One easy consequence of this is that $\deg e_S = \deg e_T$ for the above quantity to be non-zero, which means that $|S| = |T|$.

Now, we work on another decomposition of $X$ by considering a map from the basis elements $\mathbb{B}$ and the set of graphs on $N$ labeled vertices $\mathcal{G}_N$ given by
\[
\gamma: e_S \otimes e_T \mapsto ([N], T) 
\]%
where $[N]$ is the vertex set and $T$ is the edge set. Then, fixing $T$ we have a decomposition of $([N],T)$ into 
\[
([N], T ) = \left[ \bigcup_{x=1}^r (V_x, T_x)\right] \cup (V_{r+1}, \emptyset)
\]%
where $(V_x, T_x)$, for $1 \leq x \leq r$, are connected components with at least one edge and $V_{r+1}$ is the set of vertices without edges (i.e. $V_{r+1} = [N] \setminus \bigcup_{x=1}^r V_x $). Also, we define $T_{r+1} := [N]^{(2)} \setminus T$ and the spaces
\begin{align*}
G_x &:= (\mathbb{CP}^1)^{|V_x|} \times (\mathbb{CP}^1)^{|T_x|}
\end{align*}
with the projections of $X$ onto the $i^{th}$ $\mathbb{CP}^1$ factors and $(k,l)^{th}$ $\mathbb{CP}^1$ factors such that $i \in V_x$ and $(k,l) \in T_x$
\[
p_x: X \rightarrow G_x \qquad \text{for } 1 \leq x \leq r+1.
\]

Thus, we obtain the decomposition
\[
X = \prod_{x=1}^{r+1} G_x.
\]

Next, given the inclusions
\begin{align*}
q_x &: G_x \hookrightarrow X \qquad \text{ for } 1 \leq x \leq r+1,
\end{align*}
we define the maps
\[
\psi_{x,y} := p_y \circ \psi \circ q_x : G_x \rightarrow G_y,
\]
and we note that $\psi_{x,y}$ is constant unless $x=y$. Then, given a basis element $e_S \otimes e_T$, we can write via Kunneth's formula
\[
e_S \otimes e_T = \left[ \bigotimes_{x=1}^r (e_{S_x} \otimes e_{T_x}) \right] \otimes e_{S_{r+1}}
\]
where $S_x := V_x \cap S$ for $1 \leq x \leq r+1$. Thus, from this decompostion, we obtain
\begin{align*}
\langle\psi_* (e_S \otimes e_T ), e_S \otimes e_T\rangle &= \\
\langle(\psi_{r+1,r+1})_* (e_{S_{r+1}}), e_{S_{r+1}}\rangle &\prod_{x=1}^r \langle(\psi_{x,x})_* (e_{S_x} \otimes e_{T_x} ), e_{S_x} \otimes e_{T_x}\rangle 
\end{align*}

Thus, we may assume that $([N],T)$ is connected and that 
\[
S(T) := \bigcup_{(i,j) \in T} \{i,j\} = [ N] 
\]

Also, we have that $S \subseteq [N]$ and from before we have that $|S| = |T|$. Then, $|T| \leq N$. Since $([N], T)$ is a connected graph, we must have that $|T| = N$, which means that $([N],T)$ is a connected graph with one cycle, or $|T| = N-1$, which means that $([N],T)$ is a tree. Otherwise, if $|T| < N-1$, then $([N],T)$ is not connected.

Then, given the following lemma

\begin{lemma}
Given $T \subseteq [N]^{(2)}$ such that the graph $([N] , T)$ is connected,
\[
\langle(\psi_{21})_*(e_T), e_S\rangle = \delta_{|S|,N-1} = \langle(\psi_{12})_*(e_S), e_T \rangle.
\]
\end{lemma}

We have that, if $([N],T)$ is connected, then $\langle\psi_*(e_S \otimes e_T) ,e_S \otimes e_T \rangle =1$ if and only if $|T|= |S| = N-1$. This means that $([N],T)$ is a tree and that $[N] -S = \{\rho\}$. Therefore, we have that $([N],T,[N]-S)$ is a rooted tree. So we obtain 
\begin{align*}
\sum_{i=0}^{\dim_{\mathbb{R}}X} (-1)^i Tr (\psi_i \phi^i ) &= \sum_{e_S \otimes e_T \in \mathbb{B}} (-1)^{\deg(e_S \otimes e_T)} \langle\psi_* (e_S \otimes e_T ), \phi^*(e_S \otimes e_T)\rangle \\
&=\sum_{e_S \otimes e_T \in \mathbb{B}} (-1)^{2|S| +2|T|} L^{N-|S|} \langle \psi_* (e_S \otimes e_T ), e_S \otimes e_T \rangle \\
&= \sum_{f \in \mathcal{F}} L^{\# roots} \\
&= \sum_{f \in \mathcal{F}} L^{n(f)}. \qedhere
\end{align*}
\end{proof}

Now, with this Proposition~\ref{trees} we have reduced the problem of counting ``fixed points" to a combinatorial problem of counting planted forests. Still we need to further refine the combinatorics to account for the inadmissible solutions. Indeed, the count from~\ref{trees} includes inadmissible solutions. Fortunately, we can classify inadmissible solutions using prop.~\ref{inad} and ``special" partitions of $[N]$, which we introduce.

\begin{definition}
Let $L$ and $N$ be positive integers. A \emph{labeled enhanced partition of $N$} is a pair
\[
(\mathcal{A},\mathcal{B}) = (\{A_1 , \ldots, A_s \} , \{ (B_1, \B_1, d_1), \ldots, (B_b, \B_b, d_b) \} ),
\]
where $A_1, A_2, \ldots, A_s, B_1, B_2, \ldots, B_b, \B_1, \B_2, \ldots, \B_b$ are pairwise disjoint non-empty sets whose union is $[N]$ and $d_1, d_2, \ldots, d_b \in [L]$. We use the notation $\mathbb{EP}^{l}(N)$ to denote the set of all labeled enhanced partitions of $N$.

Similarly, an \emph{enhanced partition of $N$} is a pair
\[
(\mathcal{A},\mathcal{B}) = (\{A_1 , \ldots, A_s \} , \{ (B_1, \B_1), \ldots, (B_b, \B_b) \} ),
\]
where $A_1, A_2, \ldots, A_s, B_1, B_2, \ldots, B_b, \B_1, \B_2, \ldots, \B_b$ are pairwise disjoint non-empty sets whose union is $[N]$. We use the notation $\mathbb{EP}(N)$ to denote the set of all enhanced partitions of $N$.
\end{definition}

From this, we have a ``forgetful map" $\sigma: \mathbb{EP}^l(N) \rightarrow \mathbb{EP}(N)$ defined by
\[
(\{A_1 , \ldots, A_s \} , \{ (B_1, \B_1; d_1), \ldots, (B_b, \B_b; d_b) \} ) \mapsto (\{A_1 , \ldots, A_s \} , \{ (B_1, \B_1), \ldots, (B_b, \B_b) \} )
\]
which ``forgets" the label of a partition. Then, given an enhanced partition we can define subvarieties of $X$ and $C$ as follows:
\[
C(\mathcal{A},\mathcal{B}) := \left\{ p \in C \left|
\begin{array}{l}
\tilde{\pi}_i(p) = \tilde{\pi}_j(p) \text{ if } i,j \in A \text{ for some } A \in \mathcal{A}, \\
\tilde{\pi}_k(p) = \alpha_r \text{ if } k \in B \text{ for some } (B, \B, r) \in \mathcal{B} \\
\tilde{\pi}_l(p) = \beta_r \text{ if } k \in \B \text{ for some } (B, \B, r) \in \mathcal{B} 
\end{array} \right. \right\}
\]
and 
\begin{align*}
X(\mathcal{A},\mathcal{B}) := \left\{ p \in X \left|
\begin{array}{l}
\pi_i(p) = \pi_j(p) \text{ if } i,j \in A \text{ for some } A \in \mathcal{A}, \\
\pi_k(p) = [1:0] \text{ if } k \in B \text{ for some } (B, \B, r) \in \mathcal{B} \\
\pi_l(p) = [0:1] \text{ if } k \in \B \text{ for some } (B, \B, r) \in \mathcal{B} 
\end{array} \right. \right\}
\end{align*}

One should now note that if $pt \in C(\mathcal{A}, \mathcal{B})$ for any non-trivial labeled enhanced partition $(\mathcal{A},\mathcal{B})$ and 
\[
\tilde{\phi}(pt) = \tilde{\psi}(pt)
\]%
then $pt$ is an inadmissible solution. Thus, we will use (labeled) enhanced partitions $(\mathcal{A}, \mathcal{B})$ and the corresponding subvarieties $X(\mathcal{A}, \mathcal{B})$ and $C(\mathcal{A}, \mathcal{B})$ to count inadmissible solutions. That is, we will count the number of solutions of 
\[
\tilde{\phi}|_{C(\mathcal{A}, \mathcal{B})} (pt) = \tilde{\psi}|_{C(\mathcal{A}, \mathcal{B})}(pt)
\]%
to count the number of inadmissible solutions satisfying ``conditions" $(\mathcal{A}, \mathcal{B})$.

\begin{rem}
Note that we characterize the inadmissible solutions by $z_i =z_j$, $z_i = \alpha_j, \beta_j$. It might appear that this would discount the steady state solution of $z_1 = \cdots = z_N=1$, but as we noted in the case of two particles, the steady state solution moves away from the locus of $z_i =z_j$ making our characterization of inadmissible solutions for generic $R(z)$ legitimate.
\end{rem}

First, we define the analogous set of planted forests $\mathcal{F}_N$. Given a tuple of conditions $(\mathcal{A}, \mathcal{B})$, we define $\mathcal{F}(\mathcal{A}, \mathcal{B})$ to be the set of planted forests with $\mathcal{A} \cup \mathcal{B}$ as vertices where vertices from $\mathcal{B}$ may only be roots. Moreover, we define a multiplicity for each $f \in \mathcal{F}(\mathcal{A}, \mathcal{B})$. For this, we denote as $E(f)$ the set of edges of a planted forest. We have that the root on each component of $f$ induces an orientation on each component where the root can be thought of as a source and the edges are oriented away from the root. Then, we write the edges of $f$ as tuples $(a,b)$ where the edge connects vertices $a$ and $b$, and it is oriented from $a$ to $b$. Recall, that $a$ is an $\mathcal{A}$-set or a $\mathcal{B}$-set, and denote by $\#(a)$ the cardinality of the set $a$ if $a \in A$ or the cardinality of $a(1) \cup a(2)$ if $a \in \mathcal{B}$. Now, we can define the multiplicity of a forest $f \in \mathcal{F}(\mathcal{A}, \mathcal{B})$ as 
\[
m(f):= \prod_{(a,b) \in E(f)} \#(a).
\]

 We have the following result generalizing Proposition~\ref{trees}.


\begin{lemma}
\label{two}
Given a labeled enhanced partition $(\mathcal{A}, \mathcal{B})$, we have that the number solutions of $\phi(pt) = \psi(pt)$ satisfying the inadmissibility conditions $(\mathcal{A}, \mathcal{B})$ is given by
\[
\sum_{f \in \mathcal{F}(\mathcal{A}, \mathcal{B})} m(f) \, L^{n(f)-|\mathcal{B}|}.
\]
\end{lemma}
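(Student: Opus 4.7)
The plan is to adapt the proof of Proposition~\ref{trees} to the restricted maps $\tilde{\psi}, \tilde{\phi} : C(\mathcal{A},\mathcal{B}) \to X(\mathcal{A},\mathcal{B})$ and apply Theorem~\ref{lefschet}. First I would verify that these restrictions are well defined: for $\tilde{\phi}$ this follows from $R(\alpha_r) = [1:0]$, $R(\beta_r) = [0:1]$, and $R$ acting as a class function on each $A \in \mathcal{A}$; for $\tilde{\psi}$ the analogous check uses the explicit formulas for $\xi'$ and $\omega'$ together with the extension from Appendix~\ref{blow}, in particular showing that the constrained $\omega$-values within a class produce coinciding images, and that the $\alpha_r, \beta_r$ values yield the prescribed boundary behaviour on $\mathcal{B}$-vertices. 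Next I would identify $X(\mathcal{A},\mathcal{B})$ (up to the irrelevant blow-up summand, by Proposition~\ref{directsum}) as a product of projective lines with one $\xi$-factor $\mathbb{CP}^1_A$ for each class $A \in \mathcal{A}$, one $\omega$-factor $\mathbb{CP}^1_{(k,l)}$ for each pair $\{k,l\} \in [N]^{(2)}$, and no $\xi$-factor for vertices in $\mathcal{B}$ since their $\xi$-coordinates are pinned.

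Kunneth then furnishes a basis $\{e_\mathcal{S} \otimes e_\mathcal{T}\}$ with $\mathcal{S} \subseteq \mathcal{A}$ and $\mathcal{T} \subseteq [N]^{(2)}$, and the argument of Proposition~\ref{trees} transfers essentially verbatim to give $\phi^*(e_\mathcal{S} \otimes e_\mathcal{T}) = L^{|\mathcal{A}|-|\mathcal{S}|} \, e_\mathcal{S} \otimes e_\mathcal{T}$, since only the unselected $\mathcal{A}$-factors pick up the degree $L$ of $R$ while the pinned $\mathcal{B}$-coordinates contribute nothing. The new step is computing $\psi_*$: on $C(\mathcal{A},\mathcal{B})$ each $\omega_{(k,l)}$ reduces to a function of the class coordinates $\xi_a, \xi_b$ alone, where $k \in a$ and $l \in b$; when $a = b$ it collapses to a constant and does not contribute. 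Regrouping the sum over $\mathcal{T} \subseteq [N]^{(2)}$ by the induced quotient graph on $\mathcal{A} \cup \mathcal{B}$, each edge $(a,b)$ realised in the graph carries a combinatorial multiplicity counting the $\#(a)$ underlying choices of ``source'' index in the parent vertex $a$ under the rooted orientation; this is the origin of the factor $m(f) = \prod_{(a,b) \in E(f)} \#(a)$. As in Proposition~\ref{trees}, the bipartite form of $\psi$ (swapping $\xi$ and $\omega$ coordinates) forces the only non-vanishing trace contributions to come from forest configurations, and because $\mathcal{B}$-vertices carry no $e$-class they must be roots, which matches the restriction in the definition of $\mathcal{F}(\mathcal{A},\mathcal{B})$.

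Putting the pieces together, each non-vanishing $(\mathcal{S}, \mathcal{T})$ corresponds to a planted forest $f \in \mathcal{F}(\mathcal{A},\mathcal{B})$ whose root set is $(\mathcal{A} \setminus \mathcal{S}) \cup \mathcal{B}$, contributing $m(f) L^{|\mathcal{A}|-|\mathcal{S}|} = m(f) L^{n(f) - |\mathcal{B}|}$ since the number of components equals the number of roots; summing produces the claimed formula. As a sanity check, when $(\mathcal{A},\mathcal{B})$ is the trivial partition into singletons with $\mathcal{B} = \emptyset$, we have $\#(a) = 1$ so $m(f) = 1$ and the formula collapses to $\sum_f L^{n(f)}$, recovering Proposition~\ref{trees}. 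The main obstacle will be the careful computation of $\psi_*$ on the Kunneth basis of the restricted subvariety, and in particular pinning the multiplicity of an edge $(a,b)$ at exactly $\#(a)$ rather than $\#(b)$ or $\#(a)\#(b)$; this hinges on correctly identifying which underlying pairs $(k,l)$ are counted as ``flowing out of'' the parent $a$ in the rooted orientation, and is the combinatorial heart of the argument.
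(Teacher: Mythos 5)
Your proposal follows essentially the same route as the paper: restrict $\psi,\phi$ to maps $C(\mathcal{A},\mathcal{B}) \to X(\mathcal{A},\mathcal{B})$, apply the Lefschetz theorem with the collapsed K\"unneth basis indexed by $\mathcal{S}\subseteq\mathcal{A}$ and $\mathcal{T}\subseteq[N]^{(2)}$, read off $\phi^*$ as multiplication by $L^{|\mathcal{A}|-|\mathcal{S}|}$, and identify the surviving $\psi_*$-contributions with planted forests on the vertex set $\mathcal{A}\cup\mathcal{B}$ whose $\mathcal{B}$-vertices are forced to be roots. The only (cosmetic) difference is that you extract the multiplicity $m(f)=\prod\#(a)$ directly from the pushforward of the class coordinates on the quotient, whereas the paper lifts each quotient forest to forests on $[N]$, counts the fiber of the collapse map $F$ as $\prod\#(v)\#(w)$, and then normalizes by $\prod|A_i|$ --- the same bookkeeping carried out in a different order.
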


\begin{proof}

We obtain this result from the Lefschetz Theorem~\ref{lefschet} and Proposition~\ref{trees}. The difference is that we are looking for solutions to the Lefschetz problem in subvarieties of $C$ and $X$ that are determined by the conditions $(\mathcal{A},\mathcal{B})$. Namely, consider the projection maps onto the $i^{th}$ $\mathbb{CP}^1$ factors of $X$ and $C$, respectively
\begin{align*}
\pi_i &: X \rightarrow \mathbb{CP}^1\\
\tilde{\pi}_i := \pi_i \circ \pi &: C \rightarrow \mathbb{CP}^1
\end{align*}%
where $\pi: C \rightarrow X$ is the projection from the blow-up.

Note that we have inclusion maps
\begin{align*}
j_C &: C(\mathcal{A},\mathcal{B}) \hookrightarrow C\\
j_X &: X(\mathcal{A},\mathcal{B}) \hookrightarrow X
\end{align*}%
and we define maps $\phi_{\mathcal{A},\mathcal{B}} , \psi_{\mathcal{A},\mathcal{B}} : C(\mathcal{A},\mathcal{B}) \rightarrow X(\mathcal{A},\mathcal{B})$ such that the following diagram commutes

\begin{equation}\label{com}
\begin{tikzcd}[column sep=large]
C \arrow{r}{\phi,\psi}& X\\
C(\mathcal{A} ,\mathcal{B}) \arrow[hook]{u}{j_C} \arrow{r}{\phi_{\mathcal{A}, \mathcal{B}} , \psi_{\mathcal{A}, \mathcal{B}}  }& X(\mathcal{A}, \mathcal{B}) \arrow[hook]{u}{j_X} 
\end{tikzcd}
\end{equation}

Then, by the Lefschetz Theorem~\ref{lefschet}, we have that the number of solutions (with multiplicity) that satisfy the inadmissibility conditions $(\mathcal{A},\mathcal{B})$ is given by 
\[
\lambda(\mathcal{A},\mathcal{B}) := \sum_{i=0}^{\dim_{\mathbb{R}}X(\mathcal{A},\mathcal{B})} (-1)^i Tr ((\psi_{\mathcal{A},\mathcal{B}})_i (\phi_{\mathcal{A},\mathcal{B}})^i ).
\]%
Again, we compute this number by introducing a basis of $H(X(\mathcal{A},\mathcal{B}))$ and an inner product that makes this basis orthonormal. Then, along with functorial properties of the commutative diagram~\eqref{com}, we can compute $\lambda (\mathcal{A},\mathcal{B})$ using Proposition~\ref{trees}. So, we introduce elements in $e_{A_i} \in H_{2}(X(\mathcal{A},\mathcal{B}))$ for $1 \leq i \leq s$ where 
\begin{equation} \label{e1}
(j_C)^* (e_k) = (j_X)^* (e_k) = e_{A_i} \qquad \text{ for any } k \in A_i
\end{equation}
\begin{equation} \label{e2}
(j_C)_*(e_{A_i}) = (j_X)_*(e_{A_i}) = \sum_{k \in A_i} e_k 
\end{equation}%
and for any subset $S \subseteq A$, we define 
\begin{align*}
e_S =\bigotimes_{A_i \in S} e_{A_i}.
\end{align*}

Also, note that the elements $e_{(k,l)} \in H_2(X)$ remain unaffected by the inclusion map. That is, we have $(j_X)^*(j_X)_*(e_{(k,l)}) = e_{(k,l)}$, and by abuse of notation we denote the image of $e_{(k,l)} \in H_2(X)$ under the push-forward of the inclusion map by $e_{(k,l)} \in H_{2}(X(\mathcal{A},\mathcal{B}))$. For $T \in [N]^{(2)}$, we let $e_T \in H(X(\mathcal{A},\mathcal{B}))$ be the push-forward of $e_T \in H(X)$ under the inclusion map $j_X$. Then, we have a basis for $H(X(\mathcal{A},\mathcal{B}))$ defined as follows
\[
\mathbb{B}_{\mathcal{A},\mathcal{B}} : = \{ e_S \otimes e_T \mid S \subseteq A \text{ and } T \subseteq [N]^{(2)} \}
\]%
and an inner product that makes this basis orthonormal
\[
 \langle \,\cdot\, , \,\cdot\, \rangle_{\mathcal{A},\mathcal{B}} : H(X(\mathcal{A},\mathcal{B}))^{\otimes 2} \rightarrow \mathbb{C}.
\]%
First, one can see that 
\begin{align}
\langle (\phi_{\mathcal{A},\mathcal{B}})^*(e_S \otimes e_T) , e_S\otimes e_T\rangle_{\mathcal{A},\mathcal{B}} = L^{|A| - |A \cap S|}.
\end{align}%
This follows from the fact that 
\[
\langle(j_X)^*(e_S \otimes e_T) ,(j_C)^*( e_S\otimes e_T)\rangle_{\mathcal{A},\mathcal{B}} = \langle e_S \otimes e_T, e_S\otimes e_T\rangle
\]%
and from the commutative diagram~\eqref{com}
\[
\langle(\phi_{\mathcal{A},\mathcal{B}})^* (j_X)^*(e_S \otimes e_T) ,(j_C)^*( e_S\otimes e_T)\rangle_{\mathcal{A},\mathcal{B}} = \langle\phi^* ( e_S \otimes e_T), e_S\otimes e_T\rangle.
\]

We note that the number of components is the same as the number of roots and we will show shortly that elements in $B$ and elements in the complement of $S$ in $A$ must be roots. We have that $n(f) - |\mathcal{B}| = |A| - |A \cap S|$ as we wished.

Now, we wish to compute 
\[
\langle(\psi_{\mathcal{A},\mathcal{B}})_*(e_S \otimes e_T) , e_S\otimes e_T\rangle_{\mathcal{A},\mathcal{B}}.
\]

First, note that, from~\eqref{e2}, we have 
\[
\frac{\langle(j_C)_*(e_S \otimes e_T) , (j_X)_*(e_S \otimes e_T)\rangle}{\prod_{i= i}^s |A_i|} = \langle e_S \otimes e_T , e_S \otimes e_T \rangle_{\mathcal{A},\mathcal{B}}
\]%
and so, by the commutative diagram and the functoriality of the push-forward, we have that 
\[
\langle(\psi_{\mathcal{A}, \mathcal{B}})_* (e_S \otimes e_T) , e_S \otimes e_T \rangle_{\mathcal{A},\mathcal{B}
} =\frac{\langle \psi_*(j_C)_*(e_S \otimes e_T) , (j_X)_*(e_S \otimes e_T)\rangle}{\prod_{i=1}^s |A_i|}.
\]
Again, since we know $(j_C)_*$ and $(j_X)_*$ from~\eqref{e1},~\eqref{e2}, and by Proposition~\ref{trees}, we have that
\[
\langle \psi_*(j_C)_*(e_S \otimes e_T) , (j_X)_*(e_S \otimes e_T)\rangle =1
\]%
if and only if there exist a choice $a_i \in A_i$ for $i =1, \ldots s$ such that $([N], T, [N] - \{a_1, \ldots , a_s\})$ is a planted forest. Note that there is no element of $B$ in $\{ a_1 ,\ldots, a_s\}$ since the elements in $B$ are not in the image of $(j_C)_*$, which makes elements of $B$ roots in $([N], T, [N] - \{a_1, \ldots , a_s\})$. Thus, we consider the subset $\tilde{\mathcal{F}}(\mathcal{A},\mathcal{B}) \subseteq \mathcal{F}$ of planted forest on $N$ labeled vertices where $\tilde{f} \in \tilde{\mathcal{F}}(\mathcal{A},\mathcal{B})$ if and only if there are no edges within the elements of $A_i$ for $1 \leq i \leq s$ and vertices from $B_j$ for $1 \leq j \leq r$ may only be roots. Then, the above becomes 
\[
\langle \psi_*(j_C)_*(e_S \otimes e_T) , (j_X)_*(e_S \otimes e_T)\rangle =1
\]%
if and only if there exists a choice $a_i \in A_i$ for $i =1, \ldots s$ such that $([N], T, [N] - \{a_1, \ldots , a_s\})$ is in $\tilde{\mathcal{F}}(\mathcal{A},\mathcal{B})$. Moreover, note that given a planted forest $\tilde{f} \in \tilde{\mathcal{F}}(\mathcal{A},\mathcal{B})$ we may define a planted forest $f \in \mathcal{F}(\mathcal{A},\mathcal{B})$, in the set of planted forests with $A \cup B$ as the vertex set. Indeed, for every directed edge $(k,l)$ in $\tilde{f}$, where the direction is induced from the roots, we define a directed edge $(v,w)$ such that $k \in v$ and $l \in w$ and this directed edge makes $f$ into a planted forest in $\mathcal{F}(A, B)$. We denote this map by 
\[
F: \tilde{\mathcal{F}}(\mathcal{A},\mathcal{B}) \rightarrow \mathcal{F}(\mathcal{A},\mathcal{B}).
\]%
Therefore, we have 
\begin{align*}
&\sum_{e_S \otimes e_T \in \mathbb{B}(\mathcal{A},\mathcal{B})}\langle(\psi_{\mathcal{A},\mathcal{B}})_* (e_S \otimes e_T) ,e_S \otimes e_T \rangle_{\mathcal{A},\mathcal{B}} \\
 & = \sum_{e_S \otimes e_T \in \mathbb{B}(\mathcal{A},\mathcal{B})} \frac{\langle\psi_*(j_C)_*(e_S \otimes e_T) , (j_X)_*(e_S \otimes e_T)\rangle}{\prod_{i=1}^s |A_i|} \\
&= \sum_{\tilde{f} \in \tilde{\mathcal{F}}(\mathcal{A},\mathcal{B})} \frac{1}{\prod_{i=1}^s |A_i|} \\
&= \sum_{f \in \mathcal{F}(\mathcal{A},\mathcal{B})} \sum_{\tilde{f} \in F^{-1}(f)} \frac{1}{\prod_{i=1}^s |A_i|} 
\end{align*}%
Now, one has that $|F^{-1} (f)| =\prod_{(v,w) \in E(f)} \#(v) \#(w)$ since we have $\#(v) \#(w)$ choices of edges $(k,l)$ that map to $(v,w)$ under $F$. Then,
\[
 \sum_{f \in \mathcal{F}(\mathcal{A},\mathcal{B})} \sum_{\tilde{f} \in F^{-1}(f)} \frac{1}{\prod_{i=1}^s |A_i|} = \prod_{(v,w) \in E(f)} \#(v) = m(f),
\]
since each vertex only has only one incoming edge at most and otherwise its a root. So,
\begin{align*}
\sum_{e_S \otimes e_T \in \mathbb{B}(\mathcal{A},\mathcal{B})} \langle(\psi_{\mathcal{A},\mathcal{B}})_* (e_S \otimes e_T) ,e_S \otimes e_T \rangle_{\mathcal{A},\mathcal{B}} &= \sum_{f \in \mathcal{F}(\mathcal{A},\mathcal{B})} m(f),
\end{align*}
and from~\eqref{com}, we have the desired result
\begin{align*}
\sum_{e_S \otimes e_T \in \mathbb{B}(\mathcal{A},\mathcal{B})} \langle(\psi_{\mathcal{A},\mathcal{B}})_* (\phi_{\mathcal{A},\mathcal{B}})^* (e_S \otimes e_T) ,e_S \otimes e_T \rangle_{\mathcal{A},\mathcal{B}\mathcal{A},\mathcal{B}} &= \sum_{f \in \mathcal{F}(\mathcal{A},\mathcal{B})} L^{n(f) - |\mathcal{B}|} m(f). \qedhere
\end{align*}
\end{proof}

Now, we have combinatorial equations counting the inadmissible solutions using the labeled enhanced partitions. Although this refinement is enough to obtain the proper count of admissible solutions, we refine our equations still slightly more to remove the label in the labeled enhanced partitions. Thus, for an enhanced partition $(\mathcal{A}, \mathcal{B})$, we define subvarieties of $C$ and $X$ by taking the union of the subvarieties defined by the labeled enhanced partitions which match the unlabeled enhanced partition $(\mathcal{A},\mathcal{B})$. That is,
\begin{align*}
X(\mathcal{A},\mathcal{B}) =& \bigcup_{(\tilde{\mathcal{A}},\tilde{\mathcal{B}}) \in \sigma^{-1}(\mathcal{A},\mathcal{B})} X((\tilde{\mathcal{A}},\tilde{\mathcal{B}}))\\
C(\mathcal{A},\mathcal{B}) =& \bigcup_{(\tilde{\mathcal{A}},\tilde{\mathcal{B}}) \in \sigma^{-1}(\mathcal{A},\mathcal{B})} C((\tilde{\mathcal{A}},\tilde{\mathcal{B}}))
\end{align*}

This union washes out the dependence on the labels, and we have a similar result to~\ref{two}.

\begin{lemma} \label{three}
The number of solutions to $\phi(p) = \psi(p)$ satisfying the inadmissibility conditions prescribed by the enhanced partition $(\mathcal{A}, \mathcal{B})$ is
\[
\sum_{f \in \mathcal{F}(\mathcal{A}, \mathcal{B})} m(f) \, L^{n(f)}.
\]
\end{lemma}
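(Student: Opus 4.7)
The plan is to deduce Lemma~\ref{three} from Lemma~\ref{two} by summing over the fibers of the forgetful map $\sigma : \mathbb{EP}^l(N) \to \mathbb{EP}(N)$. By the very definition of the unlabeled subvarieties, $X(\mathcal{A},\mathcal{B})$ and $C(\mathcal{A},\mathcal{B})$ are the unions over $\sigma^{-1}(\mathcal{A},\mathcal{B})$ of their labeled counterparts, so the first task is to argue that these unions are in fact \emph{disjoint}. Two distinct labelings $(\tilde{\mathcal{A}},\tilde{\mathcal{B}}) \neq (\tilde{\mathcal{A}}',\tilde{\mathcal{B}}')$ in $\sigma^{-1}(\mathcal{A},\mathcal{B})$ must differ in at least one label, $d_i \neq d_i'$. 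The corresponding subvarieties then impose the incompatible requirements $\pi_k(p) = \alpha_{d_i}$ and $\pi_k(p) = \alpha_{d_i'}$ for each $k \in B_i$; since the $\alpha_r$ are pairwise distinct for generic $p$, these conditions cannot hold simultaneously, so the two subvarieties have empty intersection (both in $X$ and, after pulling back via $\pi$, in $C$).

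Once disjointness is established, I would invoke additivity of the Lefschetz coincidence number over disjoint compact subvarieties to obtain
\[
\lambda(\mathcal{A},\mathcal{B}) \;=\; \sum_{(\tilde{\mathcal{A}},\tilde{\mathcal{B}}) \in \sigma^{-1}(\mathcal{A},\mathcal{B})} \lambda(\tilde{\mathcal{A}},\tilde{\mathcal{B}}).
\]
The fiber $\sigma^{-1}(\mathcal{A},\mathcal{B})$ has cardinality $L^{|\mathcal{B}|}$, because a labeling consists of an independent choice $d_i \in [L]$ for each of the $|\mathcal{B}|$ pairs $(B_i,\B_i)$. By Lemma~\ref{two}, every summand equals $\sum_{f \in \mathcal{F}(\mathcal{A},\mathcal{B})} m(f)\,L^{n(f)-|\mathcal{B}|}$, and this quantity depends only on the unlabeled data: the forest set $\mathcal{F}(\mathcal{A},\mathcal{B})$, the component count $n(f)$, and the multiplicity $m(f)$ all ignore the labels $d_i$. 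Pulling this label-independent sum out of the summation collapses
\[
\lambda(\mathcal{A},\mathcal{B}) \;=\; L^{|\mathcal{B}|}\sum_{f \in \mathcal{F}(\mathcal{A},\mathcal{B})} m(f)\,L^{n(f)-|\mathcal{B}|} \;=\; \sum_{f \in \mathcal{F}(\mathcal{A},\mathcal{B})} m(f)\,L^{n(f)},
\]
which is the desired identity.

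The main delicate point is the disjointness assertion, which ultimately rests on pairwise distinctness of $\{\alpha_r\}_{r=1}^L$ (and on $\alpha_r \neq \beta_{r'}$ when both $B_i$ and $\B_i$ components interact across labelings). Both properties hold on a Zariski open subset of $p$-values, which is already the regime in which Theorem~\ref{mainthm} is formulated, so no additional hypothesis is needed. Conceptually, the combinatorial content of the upgrade from Lemma~\ref{two} to Lemma~\ref{three} is simply that each $\mathcal{B}$-vertex, being forced to be a root, now contributes an extra factor of $L$ accounting for the freedom to choose which zero of $R$ pins it down.
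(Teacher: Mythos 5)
Your proposal is correct and follows essentially the same route as the paper: the paper's proof likewise asserts that $C(\mathcal{A},\mathcal{B})$ is the disjoint union of the $L^{|\mathcal{B}|}$ labeled subvarieties and sums the label-independent count from Lemma~\ref{two} over the fiber of $\sigma$. Your added justification of disjointness via the pairwise distinctness of the $\alpha_r$ is a reasonable fleshing-out of what the paper leaves implicit.
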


\begin{proof}
This follows directly from the fact that $C(\mathcal{A},\mathcal{B})$ is the disjoint union of the subvarieties $C(\tilde{\mathcal{A}},\tilde{\mathcal{B}})$ with $(\tilde{\mathcal{A}},\tilde{\mathcal{B}}) \in \sigma^{-1}(\mathcal{A},\mathcal{B})$. Also, we have that $|\sigma^{-1}(\mathcal{A},\mathcal{B})| = L^{|\mathcal{B}|}$.
\end{proof}

\subsection{The enumeration of forests}

With this, we are almost ready to perform the full count of admissible solutions to the Bethe ansatz. The general strategy is to count all solutions and then subtract the inadmissible solutions using equations from Lemma~\ref{three}. One should note that an inadmissible solution satisfies many inadmissibility conditions $(\mathcal{A},\mathcal{B})$. That is, there are multiple enhanced partitions $(\mathcal{A},\mathcal{B})$ such that $pt \in C(\mathcal{A},\mathcal{B})$. For example, if $(\mathcal{A}, \mathcal{B}) = (\{1\} , \{2\}, \ldots,\{N\}; \emptyset)$, we have $C(\mathcal{A},\mathcal{B})=C$ and so all inadmissible solutions satisfy these trivial inadmissibility conditions. Thus, we need to be careful in subtracting the number of inadmissibility conditions, making sure that we only subtract each inadmissibility condition exactly once. For this, we will introduce a combinatorial factor on enhanced partitions that will perform the count properly, but first we define a partial order on $\mathbb{EP}(N)$ that will aid in the combinatorics. For any two enhanced partitions, we say that
\[
(\mathcal{A},\mathcal{B}) \leq (\mathcal{A}',\mathcal{B}') \qquad \text{if and only if} \qquad X(\mathcal{A},\mathcal{B}) \subseteq X(\mathcal{A}',\mathcal{B}').
\]

Now, we introduce the combinatorial factor on the enhanced partitions needed for the computations. For this, we introduce the following weights on enhanced partitions.

\begin{definition}
The \emph{weight} of a set $S$ is
\[ \omega(S) = (-1)^{|S|-1}(|S|-1)! .\] Let $(\mathcal{A},\mathcal{B}) $ be an enhanced partition. Then the \emph{weight} of the tuple $(\mathcal{A},\mathcal{B})$ is \[ \omega(\mathcal{A},\mathcal{B}) = \prod_{A \in \mathcal{A}}\omega(A) \prod_{B \in \mathcal{B}} \omega(B \cup \B).
\] 
\end{definition}

\noindent Note that the weight of the tuple corresponding to $X$ is 
\[
\omega((\{1\} , \{2\}, \ldots,\{N\}; \emptyset)) = 1.
\]

\begin{lemma} \label{weightsum}
Let $(\mathcal{A}(pt),\mathcal{B}(pt))$ denote the intersection of all inadmissible varieties $(\mathcal{A},\mathcal{B})$ that contain $pt$. Then
\[
\sum_{(\mathcal{A},\mathcal{B}) \geq (\mathcal{A}(pt),\mathcal{B}(pt))} \hspace{-.8cm} \omega(\mathcal{A},\mathcal{B}) = 0
\]
unless $pt$ is admissible, in which case the sum is simply unity as noted above. 
\end{lemma}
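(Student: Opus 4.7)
The plan is to factor the sum into a product of classical Möbius sums on partition lattices, each of which vanishes by Möbius inversion. The key observation is that the block weight $\omega(S) = (-1)^{|S|-1}(|S|-1)!$ is exactly $\mu_{\Pi(S)}(\hat{0}, \hat{1})$ in the partition lattice on $S$, and depends only on the cardinality $|S|$. Consequently $\omega(\mathcal{A}, \mathcal{B})$ coincides with the partition-lattice Möbius weight of the underlying set partition of $[N]$ obtained by forgetting the $\mathcal{A}$/$\mathcal{B}$-type of each block and amalgamating each $\mathcal{B}$-block $(B, \B)$ into the single set $B \cup \B$.

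Next I would characterise refinements. Since $(\mathcal{A}(pt), \mathcal{B}(pt))$ is the minimal enhanced partition whose variety contains $pt$, coordinates of $pt$ lying in distinct blocks of $(\mathcal{A}(pt), \mathcal{B}(pt))$ take distinct values, so every block of any $(\mathcal{A}, \mathcal{B}) \geq (\mathcal{A}(pt), \mathcal{B}(pt))$ must lie inside a single block of $(\mathcal{A}(pt), \mathcal{B}(pt))$. Within an $\mathcal{A}$-block $C$ the coordinates vary freely along $X(\mathcal{A}(pt), \mathcal{B}(pt))$, so refinements of $C$ consist only of $\mathcal{A}$-sub-blocks and biject with the ordinary partition lattice $\Pi(C)$. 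Within a $\mathcal{B}$-block $(B, \B)$, each sub-block $K$ is uniquely typed by its intersection pattern: an $\mathcal{A}$-block if $K$ lies in $B$ or in $\B$, and a $\mathcal{B}$-block with parts $(K \cap B, K \cap \B)$ if $K$ meets both. The reversed pairing $(K \cap \B, K \cap B)$ and any $\mathcal{A}$-block spanning both sides of $(B, \B)$ are ruled out for generic $p$ by $\alpha_r \neq \beta_r$; hence enhanced refinements of $(B, \B)$ also biject with $\Pi(B \cup \B)$.

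Combining these ingredients, the sum factorises as
\begin{equation*}
\sum_{(\mathcal{A}, \mathcal{B}) \geq (\mathcal{A}(pt), \mathcal{B}(pt))} \omega(\mathcal{A}, \mathcal{B}) \;=\; \prod_{C} \sum_{\s \in \Pi(C)} \mu_{\Pi(C)}(\hat{0}, \s),
\end{equation*}
where $C$ ranges over the blocks of $(\mathcal{A}(pt), \mathcal{B}(pt))$ viewed as subsets of $[N]$. By the standard identity $\sum_{\s \in \Pi_n} \mu(\hat{0}, \s) = \delta_{n, 1}$, each inner factor equals $1$ if $|C| = 1$ and $0$ otherwise. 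Since every $\mathcal{B}$-block of $(\mathcal{A}(pt), \mathcal{B}(pt))$ has size at least $2$, the product vanishes unless every block of $(\mathcal{A}(pt), \mathcal{B}(pt))$ is an $\mathcal{A}$-singleton, i.e.\ unless $pt$ is admissible, in which case the product is $1$.

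The main obstacle is the refinement-bijection step for $\mathcal{B}$-blocks: one must verify rigorously that enhanced refinements of a $\mathcal{B}$-block $(B, \B)$ of $(\mathcal{A}(pt), \mathcal{B}(pt))$ correspond exactly to set partitions of $B \cup \B$, with no additional matching data between sub-blocks of $B$ and sub-blocks of $\B$ and no direction-reversed pairings. This is what collapses the sum to a product of pure partition-lattice Möbius sums and makes the weight $\omega(\mathcal{A}, \mathcal{B})$ depend only on the underlying set partition of $[N]$; the argument rests on the genericity of the parameters $\alpha_r$ and $\beta_r$, which forces the $\mathcal{A}$/$\mathcal{B}$-type of each sub-block to be determined by its underlying set alone.
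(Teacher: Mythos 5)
Your proof is correct and follows essentially the same route as the paper: both arguments hinge on the bijection (preserving weights) between enhanced partitions refining $(\mathcal{A}(pt),\mathcal{B}(pt))$ and ordinary set partitions refining the underlying partition of $[N]$, followed by multiplicativity over blocks and the vanishing of the total weight over all partitions of a set with at least two elements. The only difference is cosmetic: you close by citing the partition-lattice M\"obius identity $\sum_{\s}\mu(\hat{0},\s)=\delta_{n,1}$ (after recognising $\omega(S)$ as $\mu_{\Pi(S)}(\hat{0},\hat{1})$), whereas the paper proves the same vanishing by an explicit induction grouping partitions according to the block containing $N$.
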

\begin{proof}
First, we will show that we only need to prove this for the case where $(\mathcal{A}(pt),\mathcal{B}(pt)) = (A_1)$, i.e. when $\xi_1 = \cdots = \xi_N$. In this case, we are summing over all partitions of $[N]$, refinements of $\{\{1,2,\ldots, N\}\}$, in which case we will see that the definition of the weights is natural and gives us our lemma. 

Given a tuple $(\mathcal{A},\mathcal{B})$, consider it as a partition $P_{\mathcal{A},\mathcal{B}}$ of $[N]$ in which we have listed separately the numbers that belong to $B_1',\ldots,B_b'$ in a set we call $B'$. Now, if we take any refinement of $P_{\mathcal{A},\mathcal{B}}$, we can map it onto an inadmissible variety by comparing each set in the partition with $B'$:\\
1. If the set is disjoint with $B'$ or contained in $B'$, then it is an $A$ set.\\
2. If the set has nonempty intersection with $B'$ but is not contained in $B'$, then it is a $B$ set with those elements from $B'$ distributed accordingly. Note that the weight is preserved under this map.

Example: Let $(\mathcal{A},\mathcal{B}) = (  \{ \{1,2\} \} ,  \{ \{3,4\}, \{5\} \}  )$. Then $P_{\mathcal{A},\mathcal{B}} = \{ \{1,2\}, \{3,4, 5\}\}$ and $B' = \{3,4\}$. Consider a refinement such as $P' = \{ \{1\}, \{2\}, \{3\}, \{4,5\}\}$. This corresponds to $( \{\{1\}, \{2\}, \{3\}  \} ,  \{ \{4\}, \{5\}\}  )$.

Since every refinement corresponds to a unique inadmissible variety containing $(\mathcal{A}(pt),\mathcal{B}(pt))$, it only remains to prove the lemma for inadmissible varieties without roots and poles, i.e. with $A$ sets only. However, the multiplicativity of $\omega$ reduces this to the case of a single $A$ set, the case $(\mathcal{A}(pt),\mathcal{B}(pt)) = (A_1)$, $\xi_1 = \cdots = \xi_N$. 

We will proceed by induction. The base case is obvious. To get a sense of how the cancellation occurs, check $N=2$: 
\[
\omega(\{\{1,2\}\}) + \omega(\{\{1\},\{2\}\}) = -1 + 1 = 0.
\]
Let $A = \{1,2, \ldots, N\}$. Consider every partition in which $N$ is in a set by itself, e.g. $\{\{1,2,\ldots,N-1\}, \{N\}\}$. By the induction hypothesis, the sum of the weights of these vanishes since the weight of the singleton $N$ can be factored out. Now consider every partition in which $N$ is in a pair, such as $\{\{1,2,\ldots,N-2\},\{N-1,N\}\}$. Again, the sum of the weights is zero by induction since the weight of the pair including $N$ can be factored out of every term. Continuing in this way, we are left with the partitions in which $N$ is in a set of cardinality $N-1$ and the original partition. The sum of these weights is
\begin{align*}
\omega(A) &= \omega(\{1,2, \ldots, N\}) + (N-1)\omega(\{\{1\},\{2,\ldots,N\}\}) \\
&= (-1)^{N-1}(N-1)! + (N-1)(-1)^{N-2}(N-2)! = 0.
\end{align*}
\end{proof}

\begin{corollary}\label{answer}
The number of admissible solutions to the Bethe ansatz equations is
\[
\sum_{(\mathcal{A},\mathcal{B})} \omega(\mathcal{A},\mathcal{B}) \, \lambda(\mathcal{A},\mathcal{B}).
\]
\end{corollary}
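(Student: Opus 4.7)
The plan is a straightforward double-counting/swap-of-summations argument that reduces the corollary to Lemma~\ref{weightsum}. By the definition of $\lambda(\mathcal{A},\mathcal{B})$ (Lefschetz count on the subvariety $C(\mathcal{A},\mathcal{B})$), each Lefschetz number enumerates, with multiplicity, precisely those solutions $pt$ of $\tilde\phi(pt)=\tilde\psi(pt)$ that lie in $C(\mathcal{A},\mathcal{B})$. Equivalently, using the partial order on $\mathbb{EP}(N)$, a solution $pt$ contributes to $\lambda(\mathcal{A},\mathcal{B})$ exactly when $(\mathcal{A},\mathcal{B})\ge(\mathcal{A}(pt),\mathcal{B}(pt))$, where $(\mathcal{A}(pt),\mathcal{B}(pt))$ is the minimal enhanced partition whose subvariety contains $pt$.

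The main step is then to interchange the order of summation:
\[
\sum_{(\mathcal{A},\mathcal{B})} \omega(\mathcal{A},\mathcal{B})\,\lambda(\mathcal{A},\mathcal{B})
= \sum_{pt}\ \sum_{(\mathcal{A},\mathcal{B})\ \ge\ (\mathcal{A}(pt),\mathcal{B}(pt))} \omega(\mathcal{A},\mathcal{B}).
\]
By Lemma~\ref{weightsum}, the inner sum equals $0$ whenever $pt$ is inadmissible (the weights on the upper set of a non-trivial enhanced partition cancel), and equals $1$ when $pt$ is admissible (the only enhanced partition above the trivial one $(\{1\},\ldots,\{N\};\emptyset)$ that contains an admissible point is itself, and its weight is unity). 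Hence the right-hand side collapses to the number of admissible solutions, as claimed.

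The only bookkeeping issue is to verify that the passage from ``$pt\in C(\mathcal{A},\mathcal{B})$'' to ``$(\mathcal{A},\mathcal{B})\ge(\mathcal{A}(pt),\mathcal{B}(pt))$'' really is a bijection of index sets; this is immediate because the subvarieties $C(\mathcal{A},\mathcal{B})$ are closed under intersection (the intersection of two such subvarieties corresponds to the coarsest common refinement of the underlying enhanced partitions), so a unique minimal $(\mathcal{A}(pt),\mathcal{B}(pt))$ exists for every $pt$. I do not expect any serious obstacle here: the entire content of the corollary is the M\"obius-style cancellation already supplied by Lemma~\ref{weightsum}, together with the tautological identity $\lambda(\mathcal{A},\mathcal{B})=\sum_{pt\in C(\mathcal{A},\mathcal{B})} 1$ (with multiplicities) provided by the Lefschetz theorem applied to the pair $(\tilde\psi|_{C(\mathcal{A},\mathcal{B})},\tilde\phi|_{C(\mathcal{A},\mathcal{B})})$.
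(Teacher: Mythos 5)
Your proposal is correct and is precisely the argument the paper intends: the corollary is stated without a separate proof because it follows from Lemma~\ref{weightsum} by exactly the interchange of summation you describe, with $\lambda(\mathcal{A},\mathcal{B})$ counting (with multiplicity) the coincidences lying in $C(\mathcal{A},\mathcal{B})$ and the inner weight sum collapsing to $0$ or $1$ according to admissibility. Your remark on the existence of a unique minimal $(\mathcal{A}(pt),\mathcal{B}(pt))$ matches the paper's definition of it as the intersection of all inadmissible varieties containing $pt$, so there is no gap.
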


\subsection{The final count} 

Now that we need to compute the final sum in Corollary~\ref{answer}, it is actually more natural to relabel the forests with permutation cycles. As in our proof of Lemma~\ref{weightsum}, we note that an enhanced partition can be considered as the usual sort of set partition with certain elements marked. Now, the weight of that partition gives the number of permutations with that cycle structure with the sign encoding the sign of the permutation. So, instead of labeling our vertices with sets from the partition, we label them with cycles. Next, we consider the family of forests with vertices labeled by cycles and edges labeled by an element of the cycle labeling the vertex from which it issues. This clearly gives us a larger collection of forests, but there is no need for the weights $m$ and $\omega$ any more. 

Consider a permutation $\pi \in S_N$ and the forests whose vertices are labelled by the cycles of $\pi$. Mark some of the elements of the cycles as roots, but not all elements in a given cycle. Now, mark each edge with an element of the cycle in the vertex closer to the root. To each of these objects, we assign the weight $(-1)^{s(\pi)} L^{c(\pi)}$, where $s$ denotes the sign of the permutation and $c$ denotes the number of disjoint cycles in the permutation. Note that $(-1)^{s(\pi)} L^{c(\pi)} = (-1)^N (-L)^{c(\pi)}$. What we'd like to show is that adding up all of these weights gives $L (L-1) (L-2) \cdots (L-N+1)$. Recall that
\[
L (L-1) (L-2) \cdots (L-N+1) = \sum_{k=1}^N s(N,k) L^k,
\]
where $s(N,k)$ is the Stirling number of the first kind with $|s(N,k)|$ giving the number of permutations in $S_N$ with exactly $k$ cycles. This implies that the forests which contribute to this sum are only those with marked elements. It remains to show that the unmarked forests cancel out. 

To this end, we construct a simple sign-reversing involution. Take the smallest unmarked element $s$. If $s$ is part of a cycle of length 1, then contract the edge adjacent to this vertex closer to the root, and slot the label s in the cycle, just after the label of the contracted edge. If $s$ is part of a cycle of length greater than 1, then extend a new edge, labelled with the element that was just before $s$ in the cycle. At the other end of this new edge, label the vertex with the 1-cycle $s$, and carry the tree of $s$-labelled edges (and its ancestors) with it. This is clearly an involution and the sign changes because the number of vertices has changed by 1. Thus, when we sum the terms $(-1)^N (-L)^{c(\pi)}$ for all of the forests, the unmarked ones cancel out to zero and those that remain give us exactly 
\[
\sum_{k=1}^N s(N,k) L^k
\]
by construction and hence the number of admissible solutions to the Bethe ansatz equations is 
\[
L (L-1) (L-2) \cdots (L-N+1) = N! \binom{L}{N}
\] 
as desired for completeness.

\section{Non-generic values of the hopping rate} \label{limiting}

We noted that the Bethe ansatz doesn't work for all $p \in \mathbb{CP}^1$. Indeed for fixed $N$ and $L$, by the Riemann-Hurwitz formula~\cite{Hartshorne}, we have that there are at most a finite number of points in $\mathbb{CP}^1$ where the ansatz doesn't work, and we call these points \emph{ramification points}. The problem is not with our method of counting but rather with the fact that, for the ramification points, certain solutions will have higher multiplicity. So despite obtaining the correct count, not all the solutions are distinct. 

Let's make this more explicit and take a $p$ that is not a ramification point. Then, for this $p$, we obtain $M := \binom{L}{N}$ distinct vectors $\vec{\xi}_1(p), \ldots , \vec{\xi}_M(p)$ such that $\{ u_{\vec{\xi}_i(p)}(\vec{x}) \mid i = 1, \ldots , M\}$ forms a complete basis of the ASEP system with eigenvalues $\{ E_{\vec{\xi}_i(p)}(\vec{x}) \mid i = 1, \ldots , M\}$, respectively. Note that we make the dependence of the $\vec{\xi}_i$ on $p$ explicit, as these vectors change as we change $p$. In fact, when we take $p$ to be a ramification point $p_r$, two or more of these vectors are the same, and the basis can no longer be complete. That is, for some $i \neq j$ we have $\vec{\xi}_i(p_r) = \vec{\xi}_j(p_r)$ and $\vec{\xi}_i(p) \neq \vec{\xi}_j(p)$ for all $p \neq p_r$ in a neighborhood of $p_r$. Thus, the ansatz is not complete at $p_r$, but all is not lost. We can use the fact that the ansatz is complete at every point in a neighborhood of $p_r$ to perform a limiting process that will complete the ansatz at $p_r$. We will show our reasoning on a toy model then apply it to our particular case.

Take a $2 \times 2$ matrix $H_t$ on $\mathbb{C}^2$, which depends on a parameter $t$, with eigenvectors $\vec{v}_t$ and $\vec{w}_t$ and eigenvalues $\lambda_t$ and $\mu_t$, respectively. That is,
\begin{align*}
H_t \vec{v}_t &= \lambda_t \vec{v}_t\\
H_t \vec{w}_t &= \mu_t \vec{w}_t
\end{align*}

If $\vec{v}_t \neq \vec{w}_t$, then we have $\{\vec{v}_t, \vec{w}_t\}$ is a basis of $\mathbb{C}^2$ and it diagonalizes the matrix $H_t$. Now, assume that $\vec{v}_t \neq \vec{w}_t$ for all $t \neq t_r$ in a neighborhood of $t_r$, and $\vec{v}_{t_r} = \vec{w}_{t_r}$. We wish to find a basis of $\mathbb{C}^2$ for $t_r$ where $\{ \vec{v}_{t_r} , \vec{w}_{t_r} \}$ is no longer a basis. For this, we note that $\{ (\vec{v}_t + \vec{w}_t) , (\vec{v}_t - \vec{w}_t) \}$ is also a basis but it no longer diagonalizes the matrix $H_t$. We have 
\begin{align*}
H_t \left[ \frac{1}{2}(\vec{v}_t + \vec{w}_t) \right] &= \frac{1}{2}(\lambda_t \vec{v}_t + \mu_t \vec{w}_t) \\
H_t \left[ \vec{v}_t - \vec{w}_t \right] &= \lambda_t \vec{v}_t - \mu_t \vec{w}_t
\end{align*}

In the limit $t \rightarrow t_r$, we get that the top equation becomes an eigenvalue equation, but the second equation becomes $0 =0$. Still, we can manipulate the second equation to obtain something interesting, such as
\begin{align*}
H_t \left[ \vec{v}_t - \vec{w}_t \right] &= \lambda_t \vec{v}_t - \mu_t \vec{w}_t \\
&= (\lambda_t - \mu_t) \vec{v}_t + \mu_t ( \vec{v}_t - \vec{w}_t) \\
\Rightarrow H_t \left[ \frac{\vec{v}_t - \vec{w}_t}{\lambda_t - \mu_t} \right] &= \left( \vec{v}_t + \mu_t \frac{ \vec{v}_t - \vec{w}_t}{\lambda_t - \mu_t}\right)
\end{align*}

Then, if we take the limit $t \rightarrow t_r$, we have 
\[
\left\{ \vec{v}_{t_r} , \frac{d\vec{v}_t/ dt - d\vec{w}_t/ dt }{ d\lambda_t/dt - d\mu_t/dt }\Big|_{t=t_r} \right\}
\]%
is a basis in which the matrix becomes
\[
H_{t_r} = \begin{bmatrix}
\lambda_{t_r} & 1 \\
0 & \lambda_{t_r}
\end{bmatrix}.
\]

We complete the Bethe ansatz the same way at the ramification points. Let $p_r$ be a ramification point where 
\begin{align*}
\vec{\xi}_{j_1}^{1} &= \cdots = \vec{\xi}_{j_{k_1}}^{1} \\
 &\vdots \\
\vec{\xi}_{j_1}^{s} &= \cdots = \vec{\xi}_{j_{k_s}}^{s}
\end{align*}%

and $\vec{\xi}_{m}^{a} \neq \vec{\xi}_{n}^{b}$ if $a \neq b$. Then, the complete basis for $p_r$ becomes

\begin{align*}
\bigcup & \left\{  \frac{du_{ \vec{\xi}_{j_2}^{1} }/dp - du_{ \vec{\xi}_{j_1}^{1} }/dp }{ dE_{ \vec{\xi}_{j_2}^{1} }/dp - dE_{ \vec{\xi}_{j_1}^{1} }/dp } , \cdots ,  \frac{du_{ \vec{\xi}_{j_{k_1}}^{1} }/dp - du_{ \vec{\xi}_{j_1}^{1} }/dp }{ dE_{ \vec{\xi}_{j_{k_1}}^{1} }/dp - dE_{ \vec{\xi}_{j_1}^{1} }/dp } \right\} \\
\bigcup & \left\{  \frac{du_{ \vec{\xi}_{j_2}^{2} }/dp - du_{ \vec{\xi}_{j_1}^{2} }/dp }{ dE_{ \vec{\xi}_{j_2}^{2} }/dp - dE_{ \vec{\xi}_{j_1}^{2} }/dp } , \cdots ,  \frac{du_{ \vec{\xi}_{j_{k_1}}^{2} }/dp - du_{ \vec{\xi}_{j_1}^{2} }/dp }{ dE_{ \vec{\xi}_{j_{k_1}}^{2} }/dp - dE_{ \vec{\xi}_{j_1}^{2} }/dp } \right\} \\
 & \vdots \\
\bigcup & \left\{  \frac{du_{ \vec{\xi}_{j_2}^{s} }/dp - du_{ \vec{\xi}_{j_1}^{s} }/dp }{ dE_{ \vec{\xi}_{j_2}^{s} }/dp - dE_{ \vec{\xi}_{j_1}^{s} }/dp } , \cdots ,  \frac{du_{ \vec{\xi}_{j_{k_1}}^{s} }/dp - du_{ \vec{\xi}_{j_1}^{s} }/dp }{ dE_{ \vec{\xi}_{j_{k_1}}^{s} }/dp - dE_{ \vec{\xi}_{j_1}^{s} }/dp } \right\}.
\end{align*}

Actually, this new basis for $p_r$ may not be complete as two or more of the new basis vectors may coincide. One should note that this is the problem we are trying to fix for the point $p_r$, but this does not mean all is lost. In fact, one should note that if we end up in the case when the new basis vectors coincide, we can keep applying the same limiting process, and this procedure terminates. Indeed, we have that the $\vec{\xi}_i(p)$'s are distinct holomorphic functions depending on $p$, and if this limiting procedure doesn't terminate, it would mean that two distinct $\vec{\xi}_i(p)$'s have the same Taylor expansion at $p_r$ making the two $\vec{\xi}_i(p)$'s equal, which is a contradiction.

\section{Acknowledgements}
Eric Brattain was supported by the National Science Foundation under grant DMS-1207995. Norman Do was supported by the Australian Research Council grant DE130100650. Axel Saenz was supported by the National Science Foundation under grant DMS-0955584, UC Davis Graduate Research Mentorship Fellowship, and UC Davis Dissertation Year Fellowship. We would like to thank Dmitry Fuchs, Motohico Mulase, Brian Osserman, Austin Shapiro, and Craig Tracy for enlightening conversations. 


\appendix

\section{On blow-ups} \label{blow}

\subsection{Properties and introduction}

The blow-up is a technique in algebraic geometry that is used to resolve singularities. The simplest example is when you have a curve in a plane that intersects itself at a point. That is, we obtain something that locally looks like a cross on the plane. As a manifold, this is singular because the cross-point is not locally Euclidean. Algebraically, the tangent space at the cross-point is 2 dimensional which is higher than the expected 1 dimension. The resolution, one of many, of this singularity is easy. Introduce a time parameter and parametrize the curve using such parameter. In order to make sure that we parametrize the curve smoothly, one needs to consider the tangent vector of the path at the cross-point and check that it varies smoothly at the cross-point. Thus, we have resolved our singularity by embedding the curve in a 3-dimensional space and separating the cross-strands, but this resolution is not the blow-up resolution.

The reader should now consider this example and note that this resolution is global, meaning that we have changed the ambient space from 2 dimensions to 3 dimensions, and more importantly that the devil lies in the tangent space of the cross-point. In contrast, the blow-up is a local resolution that only uses the topology of the normal space of the cross-point. In the case of the cross, the blow-up cuts out a neighborhood of the cross-point and glues in its place a copy of the compactified normal bundle of the cross-point in the plane. This changes the embedding locally where the ambient space remains 2-dimensional but it is no longer the plane, and it separates the cross-strands according to the tangent space.

Generally, one can blow up any variety $X$ along any closed subvariety $Z$, and topologically, this corresponds to cutting out a tubular neighborhood of $Z$ and glueing the projective normal tangent bundle in its place. We don't go into the details of this techniques as it doesn't enter our computations and instead we cite the results that show that the blow-up doesn't affect our computations. We reference the reader to~\cite{Hartshorne} for further details.

First, we note where the blow-up comes in our computations. Choosing specific affine charts, we may write the coordinates of our rational maps before blow-ups as 

\[
\psi_{(k,l)} (\vec{\xi} , \vec{w}) = \tilde{w}_{(k,l)} = -\frac{F(\xi_k, \xi_l)}{F(\xi_l ,\xi_k)}.
\]

For some polynomial in two variables, $F(x,y)$. We remarked that this is not well-defined when $F(\xi_k, \xi_l) = 0 =F(\xi_l, \xi_k)$ since $0/0$ is not well-defined. To fix this problem, note that these equations define a closed subvariety $Z \subseteq X$ where the $\psi$ function is not well-defined. Also, note that the problem comes from the fact that if you approach $Z$ along distinct paths, you will obtain different limits of $-\frac{F(\xi_k, \xi_l)}{F(\xi_l ,\xi_k)}$. For example, if we approach $Z$ along $\{ F(\xi_k, \xi_l) = 0\}$, we get the limit of the fraction to be $0$. While, if we approach $Z$ along $\{ F(\xi_l, \xi_k) = 0\}$, we get the limit of the fraction to be $\infty$. Thus, we need to pull apart all the distinct limits of the function $-\frac{F(\xi_k, \xi_l)}{F(\xi_l ,\xi_k)}$ as we did with the cross-singularity, and note that the limit depends only on the normal bundle of $Z$ (i.e. how we approach $Z$). Thus, we blow up $X$ along $Z$ to make $\psi$ well-defined and smooth. Again, we will not go through the details explicitly, but we will focus on the properties of the blow-up.

\begin{lemma}[p.~602 \cite{GriffithsHarris}]
There exist a smooth map $\pi : \mathrm{Blow}_Z(X) \rightarrow X$ such that the following properties hold

\begin{enumerate}
\item $\pi^{-1} (Z) = \mathbb{P}(T_ZX)$, the inverse image of $Z$ is diffeomorphic to the projective normal tangent bundle of $Z$.
\item $\pi : \mathrm{Blow}_Z(X) \setminus \pi^{-1} (Z) \rightarrow X \setminus Z$ is an isomorphism.
\end{enumerate}

\end{lemma}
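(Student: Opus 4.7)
The plan is to construct $\mathrm{Blow}_Z(X)$ by a local-to-global procedure: build it on affine charts where $Z$ is cut out by an explicit regular sequence, verify the two listed properties on each chart, and patch the pieces together using the fact that the construction is canonical in the ideal of $Z$.

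First, since $X$ is smooth and $Z \subseteq X$ is a smooth closed subvariety of constant codimension $k$, I would cover $X$ by affine open sets $U$ on which $Z \cap U$ is the common vanishing locus of a regular sequence $f_1, \ldots, f_k \in \mathcal{O}(U)$. On each such chart, define
\[
\mathrm{Blow}_{Z\cap U}(U) := \overline{\{(x, [f_1(x) : \cdots : f_k(x)]) \mid x \in U \setminus Z \}} \subseteq U \times \mathbb{P}^{k-1},
\]
cut out scheme-theoretically by the bilinear equations $f_i(x)\, v_j - f_j(x)\, v_i = 0$, with $\pi$ the projection to $U$. Away from $Z$ some $f_i$ is nonzero, so the projective coordinates $[f_1 : \cdots : f_k]$ are well-defined and $\pi$ is an isomorphism onto $U \setminus Z$, giving property~(2). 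Over a point $z \in Z$ the defining equations become $0 = 0$, so the fiber is $\{z\} \times \mathbb{P}^{k-1}$; identifying the images of $f_1, \ldots, f_k$ in $\mathfrak{m}_z / \mathfrak{m}_z^2$ with a basis of the conormal fiber $(N^*_{Z/X})_z$ exhibits this exceptional fiber as $\mathbb{P}(N_{Z/X})_z$, giving property~(1) on this chart.

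Next, I would glue. On an overlap $U \cap U'$ with a second regular sequence $g_1, \ldots, g_k$ generating the ideal of $Z$, one has $g_i = \sum_j a_{ij}(x)\, f_j(x)$ with $(a_{ij})$ an invertible matrix of regular functions on $U \cap U'$. The induced projective-linear automorphism $[v] \mapsto [A(x)\, v]$ of $\mathbb{P}^{k-1}$ yields a canonical isomorphism between the two local blow-ups that commutes with projection to the base, so the charts patch together into a global smooth space $\mathrm{Blow}_Z(X)$ with a global projection $\pi$. The same transition matrices are precisely those defining the normal bundle $N_{Z/X}$, so the identification $\pi^{-1}(Z) \cong \mathbb{P}(N_{Z/X})$ is global.

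The main technical obstacle is verifying smoothness of $\mathrm{Blow}_Z(X)$ along the exceptional divisor. To handle this, I would work in the standard affine chart of $\mathbb{P}^{k-1}$ given by $v_i \neq 0$, writing $u_j = v_j/v_i$ so that the bilinear equations reduce to $f_j = u_j f_i$ for $j \neq i$. Choosing local coordinates $(x_1, \ldots, x_n)$ on $X$ near a point of $Z$ with $f_j = x_j$ for $j = 1, \ldots, k$ (possible because $f_1, \ldots, f_k$ is a regular sequence cutting out a smooth subvariety of codimension $k$), the relations $x_j = u_j x_i$ for $j \neq i$ let one take $(x_i, u_1, \ldots, \widehat{u_i}, \ldots, u_k, x_{k+1}, \ldots, x_n)$ as a smooth coordinate system on the blow-up in this chart. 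This exhibits $\mathrm{Blow}_Z(X)$ as locally affine, hence smooth, near the exceptional divisor, which is the crucial place where the regularity of the sequence $f_1, \ldots, f_k$ is used essentially.
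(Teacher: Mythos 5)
Your construction is correct and is essentially the same as the one the paper uses: the lemma is quoted from Griffiths--Harris, and the paper's appendix reproduces exactly this local model (the bilinear relations $z_i l_j = z_j l_i$ over a coordinate disc, the charts $l_j \neq 0$ with coordinates $l_i/l_j$ exhibiting smoothness along the exceptional divisor, and patching over a cover of $Z$). Your only departure is cosmetic: you work with an arbitrary regular sequence $f_1,\ldots,f_k$ and glue via the explicit transition matrices $(a_{ij})$, whereas the paper always normalizes to coordinate functions $z_{k+1}=\cdots=z_n=0$ before blowing up and invokes independence of the choice of coordinates.
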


This lemma gives us the topology of the blow-up, but we are most interested in the homology of the blow-up space as it is where we do our computations.

\begin{theorem}[Prop p.~606 \cite{GriffithsHarris}]
The induced map ring $\pi^* : H^*(X) \rightarrow H^*(\mathrm{Blow}_Z(X))$ embeds $H^*(X)$ as a direct summand of $H^*(\mathrm{Blow}_Z(X))$ (i.e. $H^*(\mathrm{Blow}_Z(X)) = H^*(X) \oplus H$)
\end{theorem}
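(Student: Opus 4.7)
The plan is to produce a canonical splitting $H^*(\widetilde{X}) = \pi^* H^*(X) \oplus H$ by combining the projection formula (to get injectivity of $\pi^*$) with the Leray--Hirsch theorem applied to the exceptional divisor, and then using a Mayer--Vietoris comparison to identify $H$. Write $\widetilde{X} = \mathrm{Blow}_Z(X)$, let $j : E \hookrightarrow \widetilde{X}$ be the exceptional divisor, and recall that $E = \mathbb{P}(N_{Z/X})$ is a projective bundle $p : E \to Z$ of relative dimension $r-1$, where $r = \mathrm{codim}_X Z$.

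First I would show that $\pi^*$ is a split injection. Since $\pi$ restricts to a biholomorphism $\widetilde{X} \setminus E \to X \setminus Z$ between dense open subsets of compact oriented manifolds of the same dimension, $\pi$ is a proper degree-one map, so $\pi_*(1) = 1$. The projection formula then yields $\pi_* \pi^* \alpha = \alpha \cdot \pi_*(1) = \alpha$ for every $\alpha \in H^*(X)$, so $\pi_*$ is a left inverse to $\pi^*$ and $H^*(\widetilde{X}) = \pi^* H^*(X) \oplus \ker \pi_*$ as graded vector spaces. This identifies the complement abstractly as $H := \ker \pi_*$.

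Next I would describe $H$ concretely using Leray--Hirsch. Let $\zeta = c_1(\mathcal{O}_E(1)) \in H^2(E)$; by Leray--Hirsch, $H^*(E)$ is a free $H^*(Z)$-module with basis $1, \zeta, \ldots, \zeta^{r-1}$, subject to the relation $\sum_{i=0}^r (-1)^i p^* c_i(N_{Z/X}) \zeta^{r-i} = 0$. I claim that $H$ is spanned by classes of the form $j_*(p^*\beta \cdot \zeta^k)$ for $\beta \in H^*(Z)$ and $1 \le k \le r-1$, after suitable modification to lie in $\ker \pi_*$. To verify this, I would compare the Mayer--Vietoris sequences associated to the covers $X = \nu(Z) \cup (X \setminus Z)$ and $\widetilde{X} = \nu(E) \cup (\widetilde{X} \setminus E)$ by tubular neighborhoods. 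The map $\pi$ induces a morphism of the two long exact sequences: on $X \setminus Z \cong \widetilde{X} \setminus E$ it is an identity; on the neighborhoods it is the bundle projection $p : E \to Z$; and on the overlap, both sides retract to the unit sphere bundle $S(N_{Z/X})$, where $\pi$ restricts to the identity. A diagram chase using the five lemma then shows that the cokernel of $\pi^*$ in each degree is exactly the Leray--Hirsch contribution from $\zeta, \zeta^2, \ldots, \zeta^{r-1}$, confirming the asserted direct sum.

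The main obstacle is the bookkeeping of the canonical splitting: the naive classes $j_*(p^*\beta \cdot \zeta^k)$ are not generally in $\ker \pi_*$, and one must correct them using the Chern classes $c_i(N_{Z/X})$ so that they become orthogonal to $\pi^* H^*(X)$ under the intersection pairing on $\widetilde{X}$. Equivalently, one must check that the Leray--Hirsch relation above is precisely the one needed to glue $H^*(E)$ to $H^*(X)$ along $i^* : H^*(X) \to H^*(Z)$ and reproduce $H^*(\widetilde{X})$; this is the content of the blow-up formula and amounts to computing $\pi_* j_*(p^*\beta \cdot \zeta^k)$ via the Gysin sequence for $E \to \widetilde{X}$. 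For the present paper this technical calculation is not needed in explicit form, since Proposition~\ref{directsum} only requires the qualitative fact that $\widetilde{\phi}^*(H_*(X)) \cap H = \emptyset$, which follows directly from the factorization $\widetilde{\phi} = \phi \circ \pi$ and the splitting established in Step~1.
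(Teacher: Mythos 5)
The paper does not actually prove this statement: it is imported verbatim from Griffiths--Harris (the bracketed ``Prop p.~606'' is the citation), so there is no internal proof to compare against. Your proposal is a correct, self-contained route, and in fact your first step already finishes the job: $\pi$ is a proper, surjective, generically one-to-one holomorphic map between compact complex manifolds of equal dimension, hence has degree one, so $\pi_*(1)=1$ and the projection formula gives $\pi_*\pi^*=\mathrm{id}$; therefore $H^*(\widetilde X)=\pi^*H^*(X)\oplus\ker\pi_*$, which is exactly the assertion with $H=\ker\pi_*$. The Griffiths--Harris argument instead runs the Mayer--Vietoris comparison you sketch in your second step (tubular neighborhoods of $Z$ and of $E$, with both overlaps retracting to the sphere bundle of $N_{Z/X}$); that route is longer but computes $H$ explicitly as $\bigoplus_{k=1}^{r-1}H^{*-2k}(Z)$, whereas your degree-one argument buys the splitting more cheaply at the cost of leaving $H$ abstract. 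Two small cautions if you were to complete the second step: the Grothendieck relation for $\mathbb{P}(N_{Z/X})$ is $\sum_{i} p^*c_i(N_{Z/X})\,\zeta^{r-i}=0$ up to the sign convention on $\mathcal{O}_E(1)$, and one should record that in this paper the centers of the successive blow-ups are smooth and pairwise disjoint (Lemma~\ref{disjoint2}), which is what licenses applying the smooth blow-up formula at each stage. As you correctly observe, only the qualitative splitting is used downstream in Proposition~\ref{directsum}, so the bookkeeping you leave open is harmless for the paper's purposes.
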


This theorem allows us to factorize the maps we are working with and determine that our computations only depend on $H^*(X)$.

\subsection{Charts}

We need to show that a smooth resolution of the $\psi$ map exists. For this, we go into the details and describe the charts of the blow-up of a manifold along a submanifold. This is all standard material that can be found in \textit{Principle of Algebraic Geometry} by Griffiths and Harris. (We copy much of this material straight from the book for reference.)

Let $\Delta$ be an $n$-dimensional disc with holomorphic coordinates $z_1, \ldots , z_n$, and let $V \subseteq \Delta$ be the locus $z_{k+1} = \cdots = z_n =0$. Let $[l_{k+1} : \cdots : l_n]$ be the homogeneous coordinates of $\mathbb{CP}^{n-k-1}$, and let 
\[
\tilde{\Delta} \subseteq \Delta \times \mathbb{CP}^{n-k-1}
\]%
be the smooth variety defined by the relations
\[
\tilde{\Delta} = \{ (z,l) | z_i l_j = z_jl_i , k+1 \leq i,j\leq n \}
\]

The projection $\pi : \tilde{\Delta} \rightarrow \Delta$ on the first factor is clearly an isomorphism away from $V$, while the inverse image of a point $z \in V$ is a projective space $\mathbb{CP}^{n-k-1}$. The manifold $\tilde{\Delta}$ with the map $\pi : \tilde{\Delta} \rightarrow \Delta$ is called the \textit{blow-up of $\Delta$ along $V$}; the inverse image $E = \pi^{-1}(V)$ is called the \textit{exceptional divisor} of the blow-up.

$\tilde{\Delta}$ may be covered by coordinate patches 
\[
U_j = (l_j \neq 0), \qquad j = k+1 , \ldots , n
\]%
with holomorphic coordinates 
\begin{align*}
z_i &=z_i, \hspace{20mm} i=1, \ldots, k \\
z(j)_i &= \frac{l_i}{l_j} = \frac{z_i}{z_j}, \hspace{10mm} i= k+1 , \ldots , \hat{j} , \ldots n \\
z_j &= z_j \hspace{20mm} i=j
\end{align*}%
on $U_j$; the coordinates $\{ z(j)_i\}$ are Euclidean coordinates on each fiber of $\pi^{-1}(p) \cong \mathbb{CP}^{n-k-1}$ of the exceptional divisior. Moreover, one can show that the blow-up $\pi : \tilde{\Delta} \rightarrow \Delta$ is independent of the coordinates chosen. This allows us to globalize the construction. That is, given a manifold $X$ and a submanifold $Z$ of codimension $k$, we may choose a collection $\{ U _{\alpha}\}$ of discs covering $Z$ such that in each disc the subvariety $Z \cap U_{\alpha}$ may be given as the locus $(z_{k+1} = \cdots = z_n =0)$, and then blow-up each disc as above and patch them together.

Now, we want to know how the subvarieties of $Y \subseteq X$ transform as we blow-up $X$ along $Z$. We define the \textit{proper transform} $\tilde{Y} \subseteq \tilde{X}_Z$ of $Y$ in the blow-up $\tilde{X}_Z$ to be the closure in $\tilde{X}_Z$ of the inverse image 
\[
\pi^{-1}(Y-Z) = \pi^{-1}(Y) - E
\]%
of $Y$ away from the exceptional divisor $E$.

\subsection{Resolution of \texorpdfstring{$\psi_N$}{psi}}

In this section, we explicitly construct the resolution of the map $\psi_N: X \dashedrightarrow X$. We are heavily influenced by the ideas of Hironaka's Theorem on the resolution of singularities \cite{Hironaka,Hironaka2,Hauser}. 

\begin{rem}In this section, our methods don't apply for the case when the hopping rate is $p = 1/2$. This is due to the fact that the subvaritey corresponding to the points where the map $\psi$ is not well-defined, for $p=1/2$, has multiplicity greater than one, as it is apparent in~\eqref{multiplicity}. This case can be approached with techniques from \cite{Hartshorne}.
\end{rem}

First, we decompose $X = X_1 \times X_2$ as in Proposition~\ref{trees}, and $\psi_N$ also decomposes 
\[
f \times g : X_1 \times X_2 \dashedrightarrow X_2 \times X_1
\]%
where $g: X_2 \rightarrow X_1$ is actually a smooth map and $f: X_1 \dashedrightarrow X_2$ is the rational part of the $\psi_N$ map. Thus, it suffices to resolve $f: X_1 \dashedrightarrow X_2$. We do this by blowing up the domain, $X_1$, several times and then we show how $f$ extends on the blow-ups.

First, recall that 
\[
f: \left( [\xi_0^i : \xi_1^i] \right)_{i=1}^N \mapsto \left( [\omega_0^{(k,l)} : \omega_1^{(k,l)}] \right)_{1 \leq k < l \leq N}
\]%
where 
\[
[\omega_0^{(k,l)} : \omega_1^{(k,l)}] = [p \xi_0^k\xi_0^l +q\xi_1^k\xi_1^l - \xi_0^k\xi_1^l : -(p \xi_0^k\xi_0^l +q\xi_1^k\xi_1^l - \xi_0^l\xi_1^k)]
\]%
where the map is undefined on the subvarieties 
\begin{equation}\label{multiplicity}
Z_{(k,l)}^i = \left\{ pt \in X \mid \pi_k(p) = \pi_l(p) = [2q : 1+ (-1)^i (1 - 2 p)] \right\}
\end{equation}
Then, after a change of variable we may assume, for convenience, that
\[
f: \left( [\xi_0^i : \xi_1^i] \right)_{i=1}^N \mapsto \left( [\omega_0^{(k,l)} : \omega_1^{(k,l)}] \right)_{1 \leq k < l \leq N}
\]%
where 
\[
[\omega_0^{(k,l)} : \omega_1^{(k,l)}] = [a(\xi_0^k\xi_1^l + \xi_0^l \xi_1^k) + \xi_0^l \xi_1^k : a(\xi_0^k\xi_1^l + \xi_0^l \xi_1^k) + \xi_0^k \xi_1^l]
\]%
and now $f$ is undefined in the subvarieties 
\begin{align*}
Z_{(k,l)}^0 &= \left\{ \left( [\xi_0^i : \xi_1^i] \right)_{i=1}^N \mid  [\xi_0^k : \xi_1^k] =[\xi_0^l : \xi_1^l] = [0:1]\right\} \\
Z_{(k,l)}^1 &= \left\{  \left( [\xi_0^i : \xi_1^i] \right)_{i=1}^N \mid  [\xi_0^k : \xi_1^k] =[\xi_0^l : \xi_1^l] = [1:0]\right\} 
\end{align*}

Now, recall, that we have that the blow-up is a local construction. Thus, to make everything explicit we will work with affine chart and describe the blow-up on each of these charts. Of course, as we performed blow-ups, we will need more charts to describe the resulting space, and we will introduce new charts accordingly. Thus, we start by introducing charts of 
\[
(\mathbb{CP}^1)^N = \left\{ \left( [\xi_0^i: \xi_1^i] \right) | i=1, \ldots, N \right\} 
\]%
indexed by subsets $I$ of $[N]$ and given by 
\[
U_I := \left( \xi_0^i \neq 0 \right) \cap \left( \xi_1^i \neq 0 \right)
\]%
with Euclidean coordinates
\[
z^I_i = \begin{cases} \xi_1^i / \xi_0^i \text{ for } i \in I \\ \xi_0^i / \xi_1^i \text{ for } i \notin I \end{cases}.
\]%
Then, we have the commutative diagram

\[
\begin{tikzcd}
U_I \arrow[hook]{d}{} \arrow[dotted]{dr}[swap]{f^I} \arrow[dotted]{drr}{f^{I}_{(k,l)}} \\
(\mathbb{CP}^1)^N \arrow[dotted]{r}[swap]{f} & (\mathbb{CP}^1)^{\frac{N}{2}(N-1)}\arrow{r}[swap]{p_{(k,l)}} & \mathbb{CP}^1 
\end{tikzcd}
\]

where $p_{(k,l)}: (\mathbb{CP}^1)^{\frac{N}{2}(N-1)} \rightarrow \mathbb{CP}^1$ is the projection onto the $(k,l)^{th}$ factor of $(\mathbb{CP}^1)^{\frac{N}{2}(N-1)}$, and we can describe $f^I_{(k,l)}$ explicitly
\[
(z^I) \mapsto 
\begin{cases} 
[ a(z^I_k +z^I_l) + z^I_l : a(z^I_k +z^I_l) + z^I_k]  \hspace{9mm} \text{ for } k,l \in I \\
[ a(z^I_k +z^I_l) + z^I_k : a(z^I_k +z^I_l) + z^I_l]  \hspace{9mm} \text{ for } k,l \notin I \\
[ a(z^I_kz^I_l +1 ) + 1 : a(z^I_k z^I_l + 1 ) + z^I_k z^I_l]  \hspace{3mm} \text{ for } k \in I , l \notin I \\
[ a(z^I_kz^I_l +1 ) + z^I_kz^I_l : a(z^I_k z^I_l + 1 ) + 1]  \hspace{3mm} \text{ for } l \in I , k \notin I.
 \end{cases}
\]
Note that, in the last two cases, we have that $f^I_{(k,l)}$is smooth. Then, after any blow-ups the proper transform of $f^I_{(k,l)}$ will remain smooth of the last two cases. Now, we wish to define the blow-ups recursively for $0\leq n \leq N$ by 
\[
C_n := \mathrm{Blow}_{Z_{n-1}} C_{n-1}
\]%
where we explain our notation now and we start by defining
\begin{align*}
C_0 &:= (\mathbb{CP}^1)^N \\
Z_0 & := \bigcup_{I \in [N]} \left( \overline{W_0^I} \cup \overline{Y_0^I} \right)
\end{align*}%
with
\begin{align}
\label{w0}
W_0^I =
\begin{cases} 
\left\{ z_i^I = 0 | i=1, \ldots , N \right\} \text{ if } [N] \subseteq I \\
\hspace{10mm} \emptyset \hspace{20mm} \text{otherwise} 
\end{cases} \\
\label{y0}
Y_0^I =
\begin{cases} 
\left\{ z_i^I = 0 | i=1, \ldots , N \right\} \text{ if } [N] \subseteq I^c \\
\hspace{10mm} \emptyset \hspace{20mm} \text{otherwise} .
\end{cases}
\end{align}

Observe that, for some (i.e. $I \neq [N],\emptyset$), we'll have $W^I_0 = Y^I_0 = \emptyset$. Regardless, we have 
\[
\overline{W^I_0} , \overline{Y^I_0} \subseteq U^I.
\]%
Also, we have that at least one of the subvarieties $W^I_0$ or $Y^I_0$ is empty, if not both, and we define 

\begin{equation}
\label{z0}
Z^I_0 := W^I_0 \cup Y^I_0 .
\end{equation}

Thus, we have the decomposition of the blow-up $C_1$ by
\begin{align*}
C_1 &= \mathrm{Blow}_{Z_0} C_0 \\
&= \bigcup_{I\subseteq [N]} \mathrm{Blow}_{Z_0^I} U^I.
\end{align*}%
Moreover, we have that 
\[
\mathrm{Blow}_{Z_0^I} U^I = 
\begin{cases}
\left\{ \left( z^I , l^I \right) \in U^I \times \mathbb{CP}^{N-1} | z^I_i l^I_j = z^I_j l^I_i \text{ for } 1 \leq i,j \leq N \right\} \hspace{8mm} \text{ if } Z^I_0 \neq \emptyset \\
\hspace{40mm}U^I \hspace{40mm} \text{ if } Z^I_0 = \emptyset.
\end{cases}
\]%
Next, we have affine charts $V^I_i \subseteq \mathrm{Blow}_{Z_0^I} U^I$ for $i=1, \ldots, N$ given by
\[
V^I_i =
\begin{cases}
\left( l^I_i \neq 0 \right) \hspace{3mm} \text{ if } Z^I_0 \neq \emptyset \\
\hspace{4mm} U^I \qquad \text{ if } Z^I_0 =\emptyset
\end{cases}
\]%
and if $\text{ if } Z^I_0 \neq \emptyset$ we have the Euclidean coordinates for the chart $V^I_i$
\[
z^I(i)_j =
\begin{cases}
l^I_j/ l^I_i \text{ for } j \neq i\\
z_i^I \hspace{4mm}\text{ for } j =i
\end{cases}
\]%
and if $\text{ if } Z^I_0 = \emptyset$ we have the Euclidean coordinates for the chart $V^I_i$
\[
z^I(i)_j = z_i^I \text{ for all } j.
\]%
Then, we have that the $V^I_i$ for $I\subseteq [N]$ and $i =1, \ldots, N$ are the charts of $C_1$, with some redundancies. Now, we can define the blow-up recursively as follows. Given $C_n$ with affine charts $V^I_{\vec{i}}$ where $I \subseteq N$ and $\vec{i} = (i_1, \ldots, i_n)$ with each $i_t \in [N]$ and pair-wise distinct, we define

\begin{align}
\label{w}
W_{n}^{I; i_1, \ldots , i_n} =
\begin{cases}
\left\{ z(i_1, \ldots, i_n)_j =0 \mid j \in [N] - \{i_1, \ldots, i_n \} \right\} \qquad \text{ if } [N] - \{i_1, \ldots, i_n \} \subseteq I \\
\hspace{25mm} \emptyset \hspace{48mm} \text{ otherwise}
\end{cases} \\
\label{y}
Y_{n}^{I; i_1, \ldots , i_n} =
\begin{cases}
\left\{ z(i_1, \ldots, i_n)_j =0 \mid j \in [N] - \{i_1, \ldots, i_n \} \right\} \qquad \text{ if } [N] - \{i_1, \ldots, i_n \} \subseteq I^c \\
\hspace{25mm} \emptyset \hspace{48mm} \text{ otherwise}.
\end{cases}
\end{align}

Note that either $W_{n}^{I; \vec{i}}$ or $Y_{n}^{I; \vec{i}}$ is empty, if not both. Then, define 
\begin{equation}
\label{z}
Z^{I;\vec{i}}_n := W_{n}^{I; \vec{i}} \cup Y_{n}^{I; \vec{i}}
\end{equation}
and 
\[
Z_n := \bigcup_{I \subseteq [N]} \bigcup_{\vec{i}} Z^{I;\vec{i}}_n
\]%
Then, we have the inductive definition
\[
C_{n+1} := \mathrm{Blow}_{Z_{n}} C_{n}
\]%
Actually, we still have to specify the order in which we perform the blow-up along the subvarieties $Z^{I;\vec{i}}_{n}$, but we claim;

\begin{lemma}
\label{disjoint2}
\begin{align}
I \neq J \text{ or }  \{ i_1 , \ldots , i_n\} \neq \{j_1 ,\ldots , j_n \} &\Rightarrow Z^{I;\vec{i}}_n \cap Z^{J;\vec{j}}_n = \emptyset \\
\label{otherwise}
I = J \text{ and }  \{ i_1 , \ldots , i_n\} = \{j_1 ,\ldots , j_n \} &\Rightarrow Z^{I;\vec{i}}_n = Z^{J;\vec{j}}_n
\end{align}
\end{lemma}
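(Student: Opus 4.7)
The plan is to prove both claims of the lemma simultaneously by induction on $n$, paralleling the recursive construction of $C_n$. The base case $n = 0$ is immediate from~\eqref{w0}--\eqref{z0}: the subvariety $Z_0^I$ is non-empty only for $I = [N]$ (giving the single point $([1\!:\!0],\ldots,[1\!:\!0])$ via $W_0^{[N]}$) and for $I = \emptyset$ (giving the single point $([0\!:\!1],\ldots,[0\!:\!1])$ via $Y_0^{\emptyset}$). These two points are distinct, so the disjointness assertion holds trivially and the equality assertion holds vacuously.

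For the inductive step, I would work in the explicit Euclidean coordinates $z^I(i_1,\ldots,i_n)_j$ on the chart $V^I_{\vec{i}}$ obtained by iterating the blow-up construction described earlier. For the equality assertion~\eqref{otherwise} with $I = J$ and $\{i_1,\ldots,i_n\} = \{j_1,\ldots,j_n\}$, the key observation is that the sequence of blow-up centers $Z^{I;i_1}_0, Z^{I;i_1,i_2}_1, \ldots, Z^{I;\vec{i}}_{n-1}$ consists of smooth subvarieties meeting transversally (as ensured by the inductive disjointness hypothesis at all earlier stages). Consequently, the iterated blow-up is independent of the order in which the $i_k$ are introduced, so the charts $V^I_{\vec{i}}$ and $V^I_{\vec{j}}$ are canonically isomorphic via a relabeling of coordinates. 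Under this isomorphism, the subvariety $Z^{I;\vec{i}}_n$, defined by the vanishing of the coordinates indexed by $[N] - \{\vec{i}\}$, is identified with $Z^{I;\vec{j}}_n$, since $[N] - \{\vec{i}\} = [N] - \{\vec{j}\}$ as sets.

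For the disjointness assertion, the argument splits into subcases. If one subvariety is of $W$-type and the other of $Y$-type and they share a common index $k \in ([N] - \{\vec{i}\}) \cap ([N] - \{\vec{j}\})$, then the defining projective conditions $\xi_k = [1\!:\!0]$ versus $\xi_k = [0\!:\!1]$ are directly incompatible, yielding disjointness. In the remaining cases with $\{\vec{i}\} \neq \{\vec{j}\}$, I would consider the images under the blow-down map $\pi_n : C_n \to X$: the subvariety $Z^{I;\vec{i}}_n$ projects into a coordinate stratum of $X$ determined by $\vec{i}$, and $Z^{J;\vec{j}}_n$ projects into a different stratum determined by $\vec{j}$. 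Their intersection in $X$ is a deeper stratum indexed by $\{\vec{i}\} \cap \{\vec{j}\}$, which has already been blown up at an earlier stage $k = |\{\vec{i}\} \cap \{\vec{j}\}|$; the strict transforms in $C_n$ are therefore separated by the exceptional divisor introduced at that stage.

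The main obstacle will be the chart-by-chart bookkeeping in this last subcase: one must express the iterated blow-up coordinates $z^I(\vec{i})_j$ explicitly as products of successive blow-up factors and verify that the simultaneous vanishing conditions imposed by $(I,\vec{i})$ and $(J,\vec{j})$ are inconsistent on any common chart overlap. The subtlety is greatest when $\{\vec{i}\} = \{\vec{j}\}$ but $I \neq J$, where one must exploit the fact that the two charts use reciprocal coordinate conventions on indices in the symmetric difference $I \triangle J$, placing the two subvarieties on separated strata of the exceptional divisor from an earlier stage rather than on a common overlap of the base charts.
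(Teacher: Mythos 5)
Your plan follows the same route as the paper's proof: compare centers by blowing down, kill the case $I \neq J$ via the incompatibility of the conditions $[\xi_0^m:\xi_1^m]=[1:0]$ versus $[0:1]$ at an index where the two charts use reciprocal conventions, separate the case $\{\vec i\}\neq\{\vec j\}$ by an exceptional divisor, and obtain \eqref{otherwise} by blowing down / order-independence. The difference is that the step you label ``the main obstacle'' is the entire content of the lemma, and the paper closes it not by general bookkeeping of iterated coordinates but by a reduction you should adopt: order the two vectors so that $i_t=j_t$ for $t\le k$ while the tails are disjoint, push down via $\pi^n_{k+1}$, and induct to reduce to the case $i_1=j_1,\dots,i_{n-1}=j_{n-1}$, $i_n\neq j_n$. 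There the separation is one line: on $Z^{I;\vec i}_n$ one has $z^I(\vec i)_{j_n}=0$ (as $j_n\in[N]-\vec i$), hence $l^I(\vec i-i_n)_{j_n}=0$; since $\vec j-j_n=\vec i-i_n$, this is exactly the condition excluding the point from the chart $V^I_{\vec j}=\left(l^I(\vec j-j_n)_{j_n}\neq 0\right)$ that carries $Z^{I;\vec j}_n$. For $I\neq J$ one works already in $C_0$: if some $m\in I\cap J^c$ lies in $[N]-\vec i$, the blow-down of $Z^{I;\vec i}_n$ forces $\xi_1^m=0$, while every point of $V^J_{\vec j}$ lies over $U_J$, where $\xi_1^m\neq 0$.

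The subcase you flag as the greatest subtlety --- $\{\vec i\}=\{\vec j\}$ with $I\neq J$ --- is precisely where your mechanism (and, in fairness, the paper's own argument) breaks, because when $I\,\triangle\, J\subseteq\{\vec i\}$ the subvarieties impose no condition at the indices where the conventions are reciprocal. Disjointness is then simply false: take $N=4$, $\vec i=\vec j=(1,2)$, $I=\{3,4\}$, $J=\{1,3,4\}$. By \eqref{w0}, \eqref{y0} and their higher analogues no earlier center meets $U_I$ or $U_J$, so $W^{I;\vec i}_2=\{[\xi_0^3:\xi_1^3]=[\xi_0^4:\xi_1^4]=[1:0]\}\cap U_I$ while $W^{J;\vec j}_2$ is the same locus intersected with $U_J$; both contain the point $([1:1],[1:1],[1:0],[1:0])$. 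These are two affine-chart descriptions of the single global center $Z^1_{(3,4)}$, not two distinct centers. So in this subcase you should not attempt to prove disjointness: the correct statement, and all that order-independence of the blow-ups actually requires, is that pairs with $\{\vec i\}=\{\vec j\}$ and $[N]-\vec i\subseteq I\cap J$ (or $\subseteq I^c\cap J^c$) glue to one smooth center and belong with case \eqref{otherwise}, while genuine disjointness holds when the types differ ($W$ versus $Y$), which is exactly your $[1:0]$-versus-$[0:1]$ incompatibility.
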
%
which says that the submanifolds are disjoint or the same, meaning that the order in which we perform the blow-ups does not make a difference.

\begin{proof}
The~\eqref{otherwise} statement is clear by blowing down the subvarieties and seeing that they agree in the blow-down.

Now, assume that 
\[
Z^{I;\vec{i}}_n \neq \emptyset \neq Z^{J;\vec{j}}_n.
\]
That is, either
\[
[N]-\vec{i} \subseteq I \qquad \text{ or } \qquad [N] - \vec{i} \subseteq I^c \\
\]%
and either
\[
[N]-\vec{j} \subseteq J \qquad \text{ or } \qquad [N] - \vec{j} \subseteq J^c .
\]%
In any case, recall that
\[
Z_{n}^{I; \vec{i}} = \left\{ z(i_1, \ldots, i_n)_j =0 \mid  j \in [N] - \{i_1, \ldots, i_n \} \right\} \subseteq V^I_{\vec{i}} = \left( l^I(\vec{i}- i_n)_{i_n} \neq 0 \right).
\]%
Assume, for simplicity, that $I=J$. Also, by~\eqref{otherwise}, we may assume that the order of the indices of $\vec{i}$ and $\vec{j}$ is such that 
\begin{align*}
i_t = j_t \qquad &\text{ for } \qquad t \leq k\\
\text{and } \qquad \{ i_{k+1} , \ldots, i_n \} & \cap \{j_{k+1} , \ldots , j_n\} = \emptyset.
\end{align*}%
Now, define for any $1 \leq t \leq n $
\[
Z_{t}^{I; \vec{i}} = \left\{ z(i_1, \ldots, i_t)_j =0 | j \in [N] - \{i_1, \ldots, i_n \} \right\} \subseteq V^I_{i_1, \ldots, i_t} 
\]%
and note that $Z_{n}^{I; \vec{i}}$ is the proper transform of $Z_{t}^{I; \vec{i}}$ under the blow-up
\[
\pi_t^n : C_n \rightarrow C_t.
\]%
Also, we have that
\[
\pi^n_t \left( Z_{n}^{I; \vec{i}} \right) \cap \pi^n_t \left( Z_{n}^{J; \vec{j}} \right) = Z_{t}^{I; i_1, \ldots i_k}.
\]%
Then, we have 
\begin{align*}
\pi^n_{k+1} \left( Z_{n}^{I; \vec{i}} \right) \cap \pi^n_{k+1} \left( Z_{n}^{J; \vec{j}} \right) &= Z_{k+1}^{I; \vec{i}} \cap Z_{k+1}^{J; \vec{j}} \\
&= Z_{k+1}^{I; i_1, \ldots, i_{k+1}} \cap Z_{k+1}^{J; j_1, \ldots , j_{k+1}}.
\end{align*}%
If $k+1 < n$, then the argument follows by induction . Thus, we may assume that, 
\[
i_1 = j_1, \ldots, i_{n-1} =j_{n-1}, \text{ and } i_n \neq j_n
\]%
Then, for any $pt \in \overline{Z_n^{I;\vec{i}}}$ we have
\begin{align*}
z^I(\vec{i})_{j_n}(pt) = 0 & \Rightarrow l^I(\vec{i})_{j_n}(pt) = 0\\
& \Rightarrow pt \notin V^I_{\vec{j}} \\
& \Rightarrow pt \notin \overline{Z_n^{J;\vec{j}}}.
\end{align*}%
Therefore, we must have that 
\[
\overline{Z_n^{I;\vec{i}}} \cap \overline{Z_n^{J;\vec{j}}} = \emptyset.
\]%
Now, we need to consider the case when $I \neq J$. First, note that if $i \in I \cap J^c$, then we have
\begin{align*}
pt \in \pi_0^n\left(\overline{Z_n^{I;\vec{i}}}\right) & \Rightarrow z^I_{i_t}(pt) =0 \\
& \Rightarrow \xi^{i_t}_1(pt) = 0 \\
& \Rightarrow pt \notin U_I \\
& \Rightarrow pt \notin \pi_0^n\left(\overline{Z_n^{J;\vec{j}}}\right)\\
& \Rightarrow \pi_0^n\left(\overline{Z_n^{I;\vec{i}}}\right) \cap \pi_0^n\left(\overline{Z_n^{J;\vec{j}}}\right) = \emptyset \\
& \Rightarrow \overline{Z_n^{I;\vec{i}}} \cap \overline{Z_n^{J;\vec{j}}} = \emptyset.
\end{align*}

Thus, if we want a non-empty intersection, we must have that $I \cap J^c = \emptyset$, and by symmetry of the argument, we also have $I^c \cap J = \emptyset$. That is, we must have $I =J$ if we want a non-empty intersection, which is as we assumed in the earlier argument.
\end{proof}

From this, we can write down the blow-up explicitly
\begin{align*}
\mathrm{Blow}_{Z_n} C_n &= \bigcup_{I \subseteq [N] } \bigcup_{\vec{i}} \mathrm{Blow}_{Z_n^{I;\vec{i}}} V^I_{\vec{i}}
\end{align*}%
with
\[
\mathrm{Blow}_{Z_n^{I;\vec{i}}} V^I_{\vec{i}} = \left\{ (z^I(\vec{i}) , l^I(\vec{i}) ) \in \mathbb{C}^N \times \mathbb{CP}^{N-n-1} | z^I(\vec{i})_i l^I(\vec{i})_j = z^I(\vec{i})_j l^I(\vec{i})_i \text{ with } i,j \in [N] - \vec{i} \right\}
\]%
if $Z_n^{I;\vec{i}}$ is non-empty and otherwise we have 
\[
\mathrm{Blow}_{Z_n^{I;\vec{i}}} V^I_{\vec{i}} = V^I_{\vec{i}}.
\]%
Then, we can write affine charts $V^I_{\vec{i}+i_{n+1}}$ of $\mathrm{Blow}_{Z_n^{I;\vec{i}}} V^I_{\vec{i}}$ with $i_{n+1} \in [N] - \vec{i}$ and 
\[
V^I_{\vec{i}+i_{n+1}} = \left( l^I(\vec{i})_{i_{n+1}} \neq 0 \right) \subseteq V^I_{\vec{i}}
\]%
with Euclidean coordinates 
\begin{align*}
z^I(\vec{i}+i_{n+1})_j &= l^I(\vec{i})_j / l^I(\vec{i})_{i_{n+1}} \qquad \text{ for } j \in [N] -\{ i_1, \ldots, i_{n+1}\} \\
z^I(\vec{i}+i_{n+1})_j &= z^I(\vec{i})_j \hspace{20mm} \text{ for } j \in \{ i_1, \ldots, i_{n+1}\}
\end{align*}%
$Z_n^{I;\vec{i}}$ is non-empty and otherwise we have 
\[
V^I_{\vec{i}+i_{n+1}} = V^I_{\vec{i}}
\]%
with Euclidean coordinates 
\[
z^I(\vec{i}+i_{n+1})_j = z^I(\vec{i})_j  \text{ for all } j \in \{ i_1, \ldots, i_{n+1}\}.
\]%
Thus, from 
\[
\left( C_n , \left\{V^I_{i_1, \ldots, i_n} \right\}_{I;i_1, \ldots, i_n}\right)
\]%
we construct 
\[
\left( C_{n+1} , \left\{ V^I_{i_1, \ldots, i_{n+1}} \right\}_{I;i_1, \ldots, i_{n+1}} \right)
\]
where $C_{n+1}$ is the blow-up of $C_n$, and this concludes our explicit construction of the blow-up for $0 \leq n \leq N$. Now, we wish to show that these blow-ups resolve the map $f : (\mathbb{CP}^1)^N \dashedrightarrow (\mathbb{CP}^1)^{\frac{N}{2}(N-1)}$. Thus, we consider

\[
\begin{tikzcd}
C_n \arrow[hook]{r}{} \arrow[hook]{d}{} \arrow[dotted]{dr}[swap]{f^n}  & V_{\vec{i}}^I \arrow[dotted]{dr}{f^{I;\vec{i}}_{(k,l)}} \\
C_0 \arrow[dotted]{r}[swap]{f} & (\mathbb{CP}^1)^{\frac{N}{2}(N-1)}\arrow{r}[swap]{p_{(k,l)}} & \mathbb{CP}^1 
\end{tikzcd}
\]

where we write 
\[
f^{I;\vec{i}}_{(k,l)} : = f^n|_{V^I_{\vec{i}}} \circ p_{(k,l)}.
\]
Now, note that if $k, l \notin \vec{i}$, we have 
\begin{equation*}
f^{I;\vec{i}}_{(k,l)} : \left( z(\vec{i})_j \right)_{j=1}^N \mapsto
\end{equation*}
\begin{equation}\label{eqn}
\begin{cases}
\left[ a \left( z^I(\vec{i})_k + z^I(\vec{i})_l \right) + z^I(\vec{i})_k : a \left( z^I(\vec{i})_k + z^I(\vec{i})_l \right) + z^I(\vec{i})_l \right] \text{ if } k,l \in I^c\\
\left[ a \left( z^I(\vec{i})_k + z^I(\vec{i})_l \right) + z^I(\vec{i})_l : a \left( z^I(\vec{i})_k + z^I(\vec{i})_l \right) + z^I(\vec{i})_k \right] \text{ if } k,l \in I\\
\left[ a \left( z^I(\vec{i})_k z^I(\vec{i})_l + 1 \right) + 1 : a \left( z^I(\vec{i})_k z^I(\vec{i})_l +1 \right) + z^I(\vec{i})_l z^I(\vec{i})_k \right] \text{ if } l \in I^c, k \in I\\
\left[ a \left( z^I(\vec{i})_k z^I(\vec{i})_l + 1 \right) + z^I(\vec{i})_k z^I(\vec{i})_l : a \left( z^I(\vec{i})_k z^I(\vec{i})_l +1 \right) + 1 \right] \text{ if } l \in I, k \in I^c\\
\end{cases}
\end{equation}
where the first two cases are singular and the other two are smooth, and we note that blow-ups along $Z^{I; \vec{i} +i_k}_{n+1}$ and $Z^{I; \vec{i} +i_l}_{n+1}$ resolve the map~\eqref{eqn}. Indeed, when we blow-up $V^I_{\vec{i}}$ along $Z^{I; \vec{i} +i_k}_{n+1}$ we get that the map, in the singular cases, is given in the local coordinates by 
\begin{equation*}
f^{I;\vec{i} + i_k}_{(k,l)} : \left( z(\vec{i}+i_k)_j \right)_{j=1}^N \mapsto
\end{equation*}
\begin{equation*}
\begin{cases}
\big[ a \left( z^I(\vec{i}+ i_k)_k + z^I(\vec{i}+ i_k)_l z^I(\vec{i}+ i_k)_k \right) + z^I(\vec{i}+ i_k)_k \\: a \left( z^I(\vec{i}+ i_k)_k + z^I(\vec{i}+ i_k)_l z^I(\vec{i}+ i_k)_k \right) + z^I(\vec{i}+ i_k)_l z^I(\vec{i}+ i_k)_k \big] \text{ if } k,l \in I^c\\
\big[ a \left( z^I(\vec{i}+ i_k)_k + z^I(\vec{i}+ i_k)_l z^I(\vec{i}+ i_k)_k \right) + z^I(\vec{i}+ i_k)_l z^I(\vec{i}+ i_k)_k \\: a \left( z^I(\vec{i}+ i_k)_k + z^I(\vec{i}+ i_k)_l z^I(\vec{i}+ i_k)_k \right) + z^I(\vec{i}+ i_k)_k \big] \text{ if } k,l \in I.
\end{cases}
\end{equation*}

One can check that in both cases this map only has removable singularities at the points where we have the equality $\left(z^I(\vec{i}+ i_k)_l = z^I(\vec{i}+ i_k )_k = 0 \right)$, which can be resolved by factoring out a $z^I(\vec{i}+ i_k)_k$ term from both entries. Now, if we blow up along $Z^{I; \vec{i} +i_l}_{n+1}$, we get a similar resolution where the indices $i_k$ and $i_l$ are switched. One should note that we didn't describe what happened to the maps under the blow-up in the cases where the maps are smooth. This is because by properties of the blow-up the map will remain smooth, which is what we need. Actually, once we have a smooth resolution of a map, it will stay smooth under further blow-ups. Moreover, note that for $n = N$, $\{i_1, \dots, i_n \} = \{1, \dots, N \}$. Thus, $f^n \circ p_{(k,l)}$ for every pair $(k,l)$  and we have that 
\[
f^N: C_N \rightarrow (\mathbb{P}^1)^{\frac{N(N-1)}{2}}
\]
is smooth. This completes the proof.

\nocite{*}


\end{document}